\newtheorem{definition}{Definition}
\newtheorem{theorem}{Theorem}
\newtheorem{proposition}{Proposition}
\newtheorem{lemma}{Lemma}
\journal{Information Sciences}
\begin{document}

\begin{frontmatter}



\title{Foundations of Quantum Granular Computing with Effect-Based Granules, Algebraic Properties and Reference Architectures}


\author{Oscar Montiel Ross} 

\affiliation{organization={Instituto Politécnico Nacional - CITEDI},
            addressline={AV. Instituto Politécnico Nacional 1310}, 
            city={Tijuana},
            postcode={22435}, 
            state={Baja California},
            country={México}}

\begin{abstract}
This paper introduces a formal framework for Quantum Granular Computing (QGC), which extends classical granular computing --- including fuzzy, rough, and shadowed granules --- to the quantum regime. Quantum granules are modeled as effects on a finite-dimensional Hilbert space, making memberships Born probabilities and embedding granulation in the standard formalism of quantum information theory. We establish structural results for effect-based granules, including normalization, monotonicity, the emergence of Boolean islands from commuting families, refinement under Lüders updates, and evolution under quantum channels via the adjoint channel in the Heisenberg picture. We connect QGC with quantum detection and estimation theory by viewing Helstrom’s minimum-error measurement for binary state discrimination as a system of Helstrom-type decision granules, i.e., soft counterparts of Bayes-optimal decision regions. Building on this, we introduce Quantum Granular Decision Systems (QGDS) and three reference architectures: Measurement-Driven Granular Partitioning, Variational Effect Learning, and Hybrid Classical–Quantum pipelines. Case studies on qubit granulation, two-qubit parity, and Helstrom-style soft decisions show that QGC reproduces fuzzy-like graded memberships and smooth decision boundaries while exposing the role of non-commutativity and entanglement. The framework provides a unified, mathematically grounded basis for operator-valued granules in quantum information processing and intelligent systems.
\end{abstract}


\begin{graphicalabstract}
\begin{center}
  \begin{tikzpicture}[>=stealth,
  node distance=1.6cm and 2.4cm, 
  scale=0.9,
  every node/.style={transform shape}]

  \tikzset{block/.style={draw, rounded corners,
                         minimum width=2.8cm,
                         minimum height=1cm,
                         align=center}}

  \node[block, fill=gray!10] 
        (data) {Classical data};

  \node[block, fill=blue!10, right=2.0cm of data] 
        (gran) {Classical granulation\\(fuzzy, rough, etc.)};

  \node[block, fill=green!10, right=2.0cm of gran] 
        (encode) {Quantum encoding\\$|\psi(x)\rangle$};

  \node[block, fill=orange!10, below=1.2cm of encode] 
        (meas) {Quantum granular measurement\\$\{E_i\}\Rightarrow p_i$};

  \node[block, fill=red!10, left=2.0cm of meas] 
        (decision) {Decision rule\\$y = D(p_1,\dots,p_m)$};

  \node[block, fill=red!5, left=2.0cm of decision] 
        (output) {Decision outcome\\(class / action)};

  \node[circle, draw, fill=orange!20,
        inner sep=1pt, font=\scriptsize]
        (g1) at ([yshift=-0.9cm,xshift=-0.8cm]meas.south) {$E_1$};

  \node[circle, draw, fill=orange!20,
        inner sep=1pt, font=\scriptsize]
        (g2) at ([yshift=-0.9cm]meas.south) {$E_2$};

  \node[circle, draw, fill=orange!20,
        inner sep=1pt, font=\scriptsize]
        (g3) at ([yshift=-0.9cm,xshift=0.8cm]meas.south) {$E_3$};

  \node[font=\scriptsize, align=center]
        at ([yshift=-0.35cm]g2.south) {quantum granules};

  \draw[->, thick] (data.east) -- (gran.west);
  \draw[->, thick] (gran.east) -- (encode.west);
  \draw[->, thick] (encode.south) -- (meas.north);
  \draw[->, thick] (meas.west) -- (decision.east);
  \draw[->, thick] (decision.west) -- (output.east);

\end{tikzpicture}

\end{center}

Classical--quantum granular decision pipeline underlying QGC.
\end{graphicalabstract}

\begin{highlights}
\item Introduce a framework for quantum versions of granular computing models.
\item Model information granules as quantum measurements acting on states.
\item Show how fuzzy and rough granules arise as classical limits of quantum ones.
\item Demonstrate quantum granular decision pipelines on simple qubit examples.
\end{highlights}

\begin{keyword}
Granular computing \sep Quantum computing \sep Quantum granular computing \sep Quantum effects \sep Quantum decision systems \sep Measurement-driven granular partitioning



\end{keyword}

\end{frontmatter}


\section{Introduction}\label{sec:introduction}
Classical granular computing (GrC) provides a principled way to structure information through \emph{granules}—coherent units such as fuzzy sets, rough sets, interval-valued descriptions, and shadowed regions—that enable reasoning under uncertainty and abstraction \cite{zadeh1997,pawlak1982}. These models play a central role in intelligent systems and have been expanded through multilevel approximation schemes and structured granulation methodologies \cite{yao2004,yao1998}. Further expressive power has been achieved through shadowed sets, interval type-2 fuzzy sets \cite{pedrycz1998}, and Mediative Fuzzy Logic (MFL) \cite{Montiel2008MFL}, which allow granules to encode partial compatibility, conflict, and higher-order uncertainty. Nonetheless, classical granules typically operate within Boolean or fuzzy lattices, where all granules are assumed to coexist in a single distributive structure and can always be jointly combined. This global compatibility makes it difficult to represent phenomena driven by fundamentally incompatible observations, such as non-commuting measurements in quantum theory.

Quantum systems provide a natural setting in which these limitations become explicit. In quantum mechanics, states, measurements, and transformations are operators on a Hilbert space, and observable quantities generally do not commute \cite{nielsen2010,heinosaari2012}. This structure gives rise to \emph{contextuality}: many observables do not admit a joint description, and measurements can irreversibly disturb the system. At the same time, quantum information theory has shown that operator-valued objects such as effects, POVMs, and quantum channels offer a flexible language for soft decisions, uncertainty, and partial information \cite{busch2016,heinosaari2012}. The lattice of closed subspaces is orthomodular and non-distributive \cite{birkhoff1936}, and the set of effects forms an effect algebra \cite{foulis1994}, indicating that quantum theory already contains native algebraic structures suitable for granular modeling.

We therefore adopt the viewpoint that \emph{quantum granules} should be modeled as \emph{effects}, i.e., positive operators bounded by the identity. Effects are standard generalized events in quantum information, providing soft acceptance regions parameterized by their spectrum. Their associated Born probabilities $p_\rho(E)$, interpreted operationally as quantum degrees of membership, provide graded acceptance values for states \cite{busch2016}. This operator-based formulation captures both crisp granules (projectors) and soft granules (non-projective effects), creating a continuum that mirrors and extends fuzzy membership functions and the boundary regions of rough sets. The effect-algebra perspective \cite{foulis1994} gives QGC a mathematically coherent substrate grounded in quantum logic and operator theory.

Although the literature contains quantum-inspired approaches in areas such as cognition, decision theory, and machine learning, to the best of our knowledge no existing framework unifies operator-based granules, their algebraic properties, and computational architectures for intelligent systems. Classical GrC focuses on partitions, coverings, and approximation operators \cite{yao2004,yao1998}, which cannot be straightforwardly generalized to non-commutative settings. Rough-set–based reasoning has been widely explored in intelligent systems \cite{pawlak1982,SKOWRON2025122078}; yet its underlying classical algebra remains distributive and cannot express contextuality. Recent developments emphasize granular and topological refinements \cite{Ferreyra2023Topological}, while effect algebras provide logical foundations without operational pipelines \cite{foulis1994}. These gaps motivate the development of \emph{Quantum Granular Computing} (QGC): a coherent framework in which granules are genuinely quantum objects, and non-commutativity is treated as an essential feature rather than a limitation.

\subsection*{Contributions}\label{subsec:contributions}
This work makes the following contributions:
\begin{enumerate}
    \item We introduce a formal definition of \emph{Quantum Granular Computing} (QGC), in which quantum granules are modeled as effects on a finite-dimensional Hilbert space, and their membership degrees follow the probabilistic structure induced by the Born rule.
    \item We develop a foundational framework for effect-based granules and collect a suite of theorems characterizing their behavior, including normalization and monotonicity, Boolean islands induced by commuting families, and the evolution of granules under measurement and quantum channels via L\"uders updates and the Heisenberg adjoint of Kraus channels.
    \item We show that core models of classical granular computing---including fuzzy and rough granules---arise as commutative and projective special cases of effect-based quantum granules, thereby embedding classical GrC into the QGC framework and clarifying the classical--to--quantum correspondence.
    \item We demonstrate that optimal binary quantum decisions admit a natural interpretation as an \emph{optimal decision granule} of Helstrom type, linking the concepts of quantum granularity to quantum detection and estimation theory.
    \item We propose three reference architectures for QGC: Measurement-Driven Granular Partitioning (MDGP), Variational Effect Learning (VEL), and Hybrid Classical--Quantum pipelines (HCQ), emphasizing their suitability for near-term quantum devices.
    \item We present compact case studies on qubit granulation and binary quantum decision tasks---together with a quantum granular decision-system perspective---that establish parallels with fuzzy-style decision schemes while remaining rooted in operator theory.
\end{enumerate}

Overall, this paper provides a mathematically grounded and computationally relevant foundation for QGC, bridging classical granular models, operator algebras, and contemporary quantum information science.

\subsection*{Paper organization}\label{subsec:paper_organization}
The rest of this paper is organized as follows. 
Section~\ref{sec:background} recalls the classical roots of granular computing (GrC), reviewing fuzzy, rough, interval-valued, and shadowed sets, together with mediative fuzzy logic as representative models of information granules. 
Section~\ref{sec:Preliminaries} introduces the quantum formalism and notation used throughout, covering finite-dimensional Hilbert spaces, density operators, projective and positive operator-valued measurements, quantum channels, and the Heisenberg picture. 
Section~\ref{sec:quantum-granules} defines quantum granules as effects, develops their basic algebraic and probabilistic properties, establishes results on Boolean islands for commuting families, and analyzes granular refinement under measurements and the evolution of granules under quantum channels. 
Section~\ref{sec:qgds} formalizes Quantum Granular Decision Systems (QGDS) as an operator-valued decision architecture. 
Section~\ref{sec:Models} presents three reference architectures for QGC—Measurement-Driven Granular Partitioning (MDGP), Variational Effect Learning (VEL), and Hybrid Classical--Quantum pipelines (HCQ). 
Section~\ref{sec:case-studies} illustrates the framework through compact case studies on qubit granulation, two-qubit parity effects, and Helstrom-style soft decisions. 
Finally, Section~\ref{sec:discussion} analyzes implications, limitations, and directions for further work, and Section~\ref{sec:conclusions} summarizes the main contributions of the paper.

\section{Background}\label{sec:background}
Classical granular computing (GrC) grew out of early work on rough sets and information granules, where information is represented and processed in terms of coherent ``granules'' rather than individual points \cite{pawlak1982,yao2004,yao1998}. In this view, granules summarize local similarity, indiscernibility, or tolerance relations, and support multilevel descriptions of data and knowledge. Zadeh’s notion of information granulation further emphasizes the role of fuzzy sets and linguistic labels as human-centric carriers of granular knowledge \cite{zadeh1997}.

A first family of models is based on \emph{fuzzy} and \emph{rough sets}. Fuzzy sets allow graded membership, capturing vagueness in terms of membership functions instead of crisp boundaries \cite{zadeh1965fuzzy,pedrycz2020introFuzzyGranules,bargiela2003granular}, while rough sets describe concepts via lower and upper approximations induced by an indiscernibility relation \cite{pawlak1982,yao2004,Luo2025TrilevelRoughGranular}. Both frameworks provide algebraic operations (intersection, union, complement) and approximation mechanisms that make them natural carriers of information granules in classification, clustering, and decision-making tasks.

More expressive models extend these ideas to represent higher-order uncertainty and explicit boundary regions. \emph{Shadowed sets} compress intermediate membership degrees into a three-region structure (accepted, rejected, and shadowed) that mirrors the behavior of many decision processes \cite{pedrycz1998,pedrycz2005}. Fuzzy information granulation emphasizes the role of structured collections of fuzzy sets (information granules) as basic units of human-centric reasoning and control \cite{zadeh1997}. \emph{Interval-valued} \cite{turksen1986interval,PETRY2022108887} and \emph{type-2 fuzzy sets} \cite{karnik2001operations,mendel2002type2} further enrich this picture by attaching uncertainty to membership grades themselves, leading to granules that encode both value and confidence, which are particularly useful when expert assessments or data sources are imprecise.

Logical and algebraic extensions of fuzzy models have been proposed to accommodate conflict and higher-order reasoning. \emph{Mediative Fuzzy Logic} (MFL) provides a framework in which mediative connectives and graded entailment allow one to capture compromise and conflict resolution between granules \cite{MontielCastillo2007,CastilloMelin2023,MelinCastillo2025}. More recent developments further integrate MFL with granular modeling and intelligent systems, emphasizing architectures where information granules are explicitly manipulated as computational objects \cite{CastilloMelin2023,sharma2021mediative,iancu2018heart}. \emph{Formal words} and related constructions \cite{Lin1999FormalWords,Lin2003FormalWords} offer additional algebraic tools to describe the composition and refinement of granules in structured domains, extending the expressive power of classical GrC beyond purely numerical memberships and providing a conceptual symbolic counterpart to operator-based constructions in the quantum setting.

From a structural standpoint, most classical GrC models rely on Boolean or fuzzy algebras and on refinement mechanisms that are globally compatible across all attributes or criteria \cite{yao2004,yao1998}. Partitions, coverings, and neighborhood systems are typically assumed to coexist within a single distributive or modular lattice, so that granules can be refined, intersected, or aggregated without encountering fundamental incompatibilities. This setting is adequate for many classical information systems, but it does not account for situations in which the observations themselves may be context-dependent or mutually incompatible.

In parallel, the rough-set and granular-computing communities have developed increasingly refined approximation schemes for decision-making, including three-way decisions and probabilistic rough models with adjustable granularity. In particular, probabilistic rough sets and their three-region decision structures (accept, defer, reject) have been systematically studied in the context of information systems and classification tasks \cite{yao2010threeway,yao2011superiority}. More recent work in granular rough sets and topological approaches further emphasizes the role of granulation and neighborhood structure in information science applications \cite{SKOWRON2025122078,Ferreyra2023Topological}. These approaches illustrate how granules and approximations can be tuned to balance error, indeterminacy, and decision cost. They provide a natural conceptual bridge from classical GrC to more general, context-sensitive forms of granulation.

While the present work is, to our knowledge, the first to develop a general operator-theoretic framework for Quantum Granular Computing, there are emerging contributions that combine granular ideas with quantum algorithms in more specialized settings. Yuan et al.\ introduce quantum granular-ball generation methods that exploit the structure of quantum computation to accelerate the construction of granular-ball summaries and propose a quantum $k$-nearest neighbors classifier (QGBkNN), showing substantial reductions in time complexity compared with classical granular-ball approaches for KNN classification \cite{Yuan2025QuantumGranularBalls}. Acampora and Vitiello present a hybrid granular–quantum genetic optimization scheme in which a classical granular-computing layer hierarchically reduces the continuous search space, while a quantum processor evolves the genetic population, leading to statistically significant improvements over state-of-the-art quantum evolutionary algorithms on benchmark functions under current hardware constraints \cite{Acampora2023QuantumGeneticGranular}. These studies highlight a growing interest in leveraging granularity within quantum computing at the algorithmic level, but they do not address the foundational problem of representing information granules themselves as quantum objects, which is the focus of the present paper.

In this paper, we adopt a complementary perspective: instead of working with set-theoretic granules in a classical universe, we model granules as operators acting on a Hilbert space. This shift is motivated by quantum information theory, where states, measurements, and transformations are inherently operator-valued and non-commutative. Within this setting, we reinterpret classical notions such as approximations, decision regions, and multilevel descriptions in terms of quantum effects, positive operator-valued measures (POVMs), and channels, preparing the ground for the Quantum Granular Computing framework developed in the subsequent sections.

\section{Preliminaries and Notation}\label{sec:Preliminaries}
We establish here the notation and basic tools used throughout the paper. 
We write $\mathcal{H}$ for a finite-dimensional complex Hilbert space, 
$\mathrm{L}(\mathcal{H})$ (also written $\mathcal{B}(\mathcal{H})$) for the algebra of linear operators on $\mathcal{H}$, and $I$ for the identity operator on $\mathcal{H}$.  
For operators $A,B\in\mathrm{L}(\mathcal{H})$, we write $A\succeq 0$ to mean that $A$ is positive semidefinite, and
we define $A\preceq B$ (the L\"owner order) whenever $B-A\succeq 0$. The adjoint of $X$ is $X^\dagger$, and $\mathrm{Tr}(\cdot)$ denotes the trace~\cite{nielsen2010,heinosaari2012,wilde2017}.

\subsection{States, measurements, and effects}
\label{subsec:states-effects}
The set of quantum states (pure or mixed) on $\mathcal{H}$ is
\begin{equation}\label{eq:state-set}
    \mathcal{D}(\mathcal{H})
    = \bigl\{\,\rho \in \mathrm{L}(\mathcal{H}) : \rho\succeq 0,\ \mathrm{Tr}(\rho)=1 \,\bigr\}.
\end{equation}
Elements of $\mathcal{D}(\mathcal{H})$ are called \emph{density operators} or \emph{density matrices}.  

A \emph{pure state} is an extreme point of $\mathcal{D}(\mathcal{H})$, equivalently, a rank-one projector of the form
\begin{equation}\label{eq:pure-state}
    \rho = \ket{\psi}\!\bra{\psi},
    \qquad
    \langle\psi|\psi\rangle = 1.
\end{equation}
States that are not pure are called \emph{mixed}.  
Any mixed state admits a convex decomposition
\begin{equation}\label{eq:convex-decomposition}
    \rho = \sum_k p_k \ket{\psi_k}\!\bra{\psi_k},
    \qquad
    p_k \ge 0,\ \sum_k p_k = 1,
\end{equation}
where $\{\ket{\psi_k}\}$ are unit vectors in $\mathcal{H}$.  
The decomposition~\eqref{eq:convex-decomposition} is, in general, not unique, and the vectors $\ket{\psi_k}$ need not be orthogonal.  

A \emph{projective measurement} (PVM) is given by a finite family of pairwise orthogonal projectors
$\{P_i\}_i$ with $\sum_i P_i = I$ (equivalently, $P_i P_j = \delta_{ij} P_i$).  
Given a state $\rho\in\mathcal{D}(\mathcal{H})$, the Born rule assigns outcome probabilities
\begin{equation}\label{eq:born-pvm}
    p_i = \mathrm{Tr}(\rho P_i).
\end{equation}
Projective measurements are often regarded as \emph{ideal, sharp} quantum tests, in the sense that each outcome is represented by a projector with eigenvalues in $\{0,1\}$.

More generally, a \emph{positive operator-valued measure} (POVM) is a finite family of operators
$\{E_i\}_i$ such that each $E_i$ is positive semidefinite and
\begin{equation}\label{eq:povm-def}
    0 \preceq E_i \preceq I,
    \qquad
    \sum_i E_i = I.
\end{equation}
The elements $E_i$ are called \emph{effects}, and the outcome probabilities are
\begin{equation}\label{eq:born-povm}
    p_i = \mathrm{Tr}(\rho E_i).
\end{equation}
POVMs describe generalized measurements that may incorporate noise, coarse graining, or
unresolved degrees of freedom~\cite{busch2016,holevo2011}.

In this paper, the central objects are precisely these effects.  
An \emph{effect} is any operator $E$ with
\begin{equation}\label{eq:effect-def}
    0 \preceq E \preceq I.
\end{equation}
We denote the family of all effects on $\mathcal{H}$ by $\mathrm{Eff}(\mathcal{H})$.  
Projectors are the \emph{sharp} effects (idempotent elements with $P^2=P$), whereas non-projective effects represent \emph{soft} or \emph{unsharp} quantum events.  
For a fixed state $\rho$, the Born probability
\begin{equation}\label{eq:born-membership}
    p_\rho(E) := \mathrm{Tr}(\rho E)
\end{equation}
will be interpreted as the \emph{granular membership} of $\rho$ in the quantum granule specified by $E$.

\subsection{Open-system dynamics, quantum channels, and Heisenberg adjoint}\label{subsec:open-systems}
Open quantum systems are modeled by \emph{quantum channels}.  
A quantum channel from $\mathcal{H}_\mathrm{in}$ to $\mathcal{H}_\mathrm{out}$ is a linear map
\begin{equation}
    \mathcal{E} : \mathrm{L}(\mathcal{H}_\mathrm{in}) \to \mathrm{L}(\mathcal{H}_\mathrm{out})
\end{equation}
that is completely positive and trace-preserving (CPTP).  
Complete positivity can be characterized in terms of the positivity of the
Choi matrix $J(\mathcal{E})$ associated with $\mathcal{E}$,
defined by
\begin{equation}
    J(\mathcal{E}) = (\mathcal{E}\otimes \mathrm{id})
    (|\Phi^+\rangle\langle\Phi^+|),
\end{equation}
where $\mathrm{id}$ denotes the identity map on $\mathrm{L}(\mathcal{H}_\mathrm{in})$
and $|\Phi^+\rangle = \tfrac{1}{\sqrt{d}}\sum_{i=1}^{d}
|i\rangle\otimes|i\rangle$, with $d=\dim(\mathcal{H}_\mathrm{in})$, is a maximally
entangled state on $\mathcal{H}_\mathrm{in}\otimes\mathcal{H}_\mathrm{in}$ for a fixed orthonormal basis $\{|i\rangle\}_{i=1}^d$ of $\mathcal{H}_\mathrm{in}$. 
Thus $J(\mathcal{E})$ is an operator on $\mathcal{H}_\mathrm{out}\otimes\mathcal{H}_\mathrm{in}$. 
Choi's theorem states that $\mathcal{E}$ is completely positive if and only if
$J(\mathcal{E}) \succeq 0$~\cite{Choi1975}.

Every such map admits an operator-sum (Kraus) representation~\cite{Kraus1983}
\begin{equation}\label{eq:kraus-representation}
    \mathcal{E}(\rho)
    \;=\; \sum_{k} K_k \,\rho\, K_k^\dagger,
    \qquad
    \sum_{k} K_k^\dagger K_k \,=\, I,
\end{equation}
for suitable Kraus operators $K_k : \mathcal{H}_\mathrm{in} \to \mathcal{H}_\mathrm{out}$~\cite{nielsen2010,heinosaari2012,wilde2017}.  
The completeness relation in~\eqref{eq:kraus-representation} guarantees
trace preservation, and the representation is not unique: different sets
$\{K_k\}$ related by unitary mixing define the same channel. 
For a modern and self-contained account of quantum channels, Kraus and
Stinespring representations, and the Schr\"odinger--Heisenberg duality,
we refer to Watrous~\cite{watrous2018}.

Physically, a channel $\mathcal{E}$ can be obtained by coupling the system to an environment, applying a global unitary, and tracing out the environment (Stinespring dilation).  
In the \emph{Schr\"odinger picture}, the dynamics is described by the evolution of states, while observables (or effects) are kept fixed: a state $\rho$ is mapped to $\mathcal{E}(\rho)$.  
In the \emph{Heisenberg picture}, the states are kept fixed and observables or effects evolve according to the adjoint (dual) channel
\begin{equation}\label{eq:heisenberg-adjoint}
    \mathcal{E}^\dagger : \mathrm{L}(\mathcal{H}_\mathrm{out}) \to \mathrm{L}(\mathcal{H}_\mathrm{in}),
    \qquad
    \mathcal{E}^\dagger(E) = \sum_k K_k^\dagger E K_k,
\end{equation}
characterized by the trace duality
\begin{equation}\label{eq:trace-duality-channel}
    \Tr\!\bigl[\mathcal{E}(\rho)\,E\bigr]
    \;=\;
    \Tr\!\bigl[\rho\,\mathcal{E}^\dagger(E)\bigr]
\end{equation}
for all states $\rho$ and effects (or observables) $E$.  
For a trace-preserving channel, $\mathcal{E}^\dagger$ is a completely positive unital map, i.e.,
\begin{equation}\label{eq:adjoint-unital}
    \mathcal{E}^\dagger(I) = I.
\end{equation}

In the granular setting developed in this paper, we use the adjoint channel $\mathcal{E}^\dagger$ to describe how quantum granules (effects) are transformed by noisy dynamics. This viewpoint is central for analyzing the evolution of granular memberships under open-system dynamics and quantum decision processes.

\subsection{Lattice of subspaces and orthomodularity}\label{subsec:lattice}
Projectors on $\mathcal{H}$ are closely related to closed subspaces: each projector $P$ is associated with its range 
$\operatorname{ran}(P) \subseteq \mathcal{H}$, and conversely, each closed subspace arises as the range of a unique orthogonal projector.  
Let $\mathcal{L}(\mathcal{H})$ denote the set of all closed subspaces of $\mathcal{H}$ (equivalently, the ranges of projectors).  

For projectors $P,Q$, we write $P\preceq Q$ when
\begin{equation}\label{eq:projector-order}
    \operatorname{ran}(P) \subseteq \operatorname{ran}(Q),
\end{equation}
equivalently, when $PQ=QP=P$. Meet and join in $\mathcal{L}(\mathcal{H})$ are defined by
\begin{align}
P \wedge Q &:= \Pi_{\operatorname{ran}(P)\cap \operatorname{ran}(Q)}, \label{eq:lattice-meet}\\[2pt]
P \vee Q   &:= \Pi_{\operatorname{span}(\operatorname{ran}(P)\cup \operatorname{ran}(Q))}, \label{eq:lattice-join}
\end{align}
with orthocomplement
\begin{equation}\label{eq:lattice-orthocomplement}
    P^\perp := I - P.
\end{equation}
Here $\Pi_{S}$ denotes the orthogonal projector onto the closed subspace $S$.

The structure $(\mathcal{L}(\mathcal{H}),\wedge,\vee,{}^\perp)$ is an \emph{orthomodular lattice}~\cite{birkhoff1936}.  
When a family of projectors commutes, the sublattice they generate is Boolean (distributive), and classical set-theoretic reasoning applies.  
In general, however, distributivity fails, reflecting the incompatibility of non-commuting quantum events.  
This non-distributive structure is one of the core motivations for developing a quantum extension of granular computing, as it indicates that genuinely quantum events cannot be captured within classical Boolean or fuzzy lattices and therefore call for an operator-based granular framework.

\subsection{Granulation, locality, and composition}\label{subsec:local-global}
Granular models often need to distinguish between \emph{local} and \emph{global} structure. 
Here, \emph{local} refers to granules that act on a single subsystem only, whereas \emph{global} granules act on the joint system and may encode genuinely nonclassical correlations. 
In the quantum setting, this distinction becomes particularly significant for composite systems.  
Let $\mathcal{H}=\mathcal{H}_A\otimes\mathcal{H}_B$ describe a bipartite system with subsystems $A$ and $B$.  
A state on the joint system is a density operator $\rho_{AB}\in\mathcal{D}(\mathcal{H}_A\otimes\mathcal{H}_B)$; its reduced states are
\begin{equation}\label{eq:partial-trace}
    \rho_A = \mathrm{Tr}_B(\rho_{AB}),
    \qquad
    \rho_B = \mathrm{Tr}_A(\rho_{AB}).
\end{equation}

A \emph{local} granule on subsystem $A$ is represented by an effect of the form
\begin{equation}\label{eq:local-granule}
    E_A \otimes I_B,
    \qquad E_A \in \mathrm{Eff}(\mathcal{H}_A),
\end{equation}
(and analogously $I_A\otimes E_B$ for subsystem $B$). 
Here $\mathrm{Eff}(\mathcal{H}_A)$ denotes the set of all effects on $\mathcal{H}_A$, i.e., all operators $E_A \in \mathrm{L}(\mathcal{H}_A)$ such that $0 \preceq E_A \preceq I_A$. 
For such a granule, the granular membership in state $\rho_{AB}$ factorizes as
\begin{equation}\label{eq:local-membership}
    p_{\rho_{AB}}(E_A\otimes I_B)
    = \mathrm{Tr}\bigl(\rho_{AB}\,E_A\otimes I_B\bigr)
    = \mathrm{Tr}(\rho_A E_A)
    = p_{\rho_A}(E_A),
\end{equation}
showing that local granules depend only on the corresponding reduced state.

By contrast, a \emph{global} granule is an arbitrary effect
\begin{equation}\label{eq:global-granule}
    E_{AB} \in \mathrm{Eff}(\mathcal{H}_A\otimes\mathcal{H}_B),
\end{equation}
which need not factorize as $E_A\otimes E_B$.  
Here $\mathrm{Eff}(\mathcal{H}_A\otimes\mathcal{H}_B)$ denotes the set of effects on the composite system, i.e., all operators $E_{AB}$ with $0 \preceq E_{AB} \preceq I_{AB}$, where $I_{AB}=I_A\otimes I_B$. 
For entangled states $\rho_{AB}$, one may have
\begin{equation}\label{eq:entangled-granule}
    p_{\rho_{AB}}(E_{AB}) = 1
\end{equation}
for some joint effect $E_{AB}$, while no product interpretation in terms of separate granules on $A$ and $B$ is possible.  
Such global granules encode joint structure that is not reducible to independent local components and are central to our analysis of contextuality and entanglement.

This distinction between local and global granules will play a central role in the Quantum Granular Decision Systems of Section~\ref{sec:qgds} and in the case studies on two-qubit systems in Section~\ref{sec:case-studies}.  
It also mirrors the classical distinction between marginal and joint information granules in structured data domains, now enriched by genuinely quantum correlations.

\section{Quantum Granules and Foundational Results}\label{sec:quantum-granules}
In classical granular computing, granules are typically defined via membership functions, equivalence relations, or interval-valued descriptions.  
In quantum systems, these structures must be generalized to account for operator-valued information, contextuality, and incompatibility.  
Building on the preliminaries in Sec.~\ref{sec:Preliminaries}, we introduce \emph{quantum granules} as effect-based granules and establish several basic results describing their algebraic, probabilistic, and dynamical behavior.  
This section provides the formal core of Quantum Granular Computing (QGC).

\subsection{Quantum granules}\label{subsec:quantum-granule-definition}
Effects are taken as the fundamental units of quantum granulation.  
In analogy with classical granular models, such as fuzzy sets, rough sets, shadowed sets, MFL granules, and formal words, where a granule is represented by a universe together with one or more membership-like maps, we adopt the following definition.

\begin{definition}[Quantum granule]\label{def:quantum-granule}
Let $\mathcal{H}$ be a finite-dimensional Hilbert space, let $\mathcal{D}(\mathcal{H})$ be the corresponding state space, and let $\mathrm{Eff}(\mathcal{H})$ be the set of effects on $\mathcal{H}$.  
A \emph{quantum granule} on $\mathcal{H}$ is the triple
\begin{equation}\label{eq:qg-triple}
    \mathcal{G} = \bigl(\mathcal{D}(\mathcal{H}), E, p_E\bigr),
\end{equation}
where
\begin{enumerate}
    \item $E\in\mathrm{Eff}(\mathcal{H})$ is an \emph{effect}, i.e.
    \begin{equation}\label{eq:granule-definition}
        0 \preceq E \preceq I,
    \end{equation}
    \item $p_E:\mathcal{D}(\mathcal{H})\to[0,1]$ is the \emph{granular membership function} associated with $E$, defined by
    \begin{equation}\label{eq:granule-probability}
        p_E(\rho) \;:=\; \Tr(\rho E),
        \qquad \rho\in\mathcal{D}(\mathcal{H}).
    \end{equation}
\end{enumerate}
Since $\mathcal{H}$ and $\mathcal{D}(\mathcal{H})$ are fixed throughout this work, we often abbreviate the triple by writing $\mathcal{G}=(E,p_E)$, and even identify the granule with its effect $E$, using the shorthand $p_\rho(E)$ for $p_E(\rho)$.
\end{definition}

Mathematically, each effect $E$ thus induces a fuzzy-set–like membership function $p_E:\mathcal{D}(\mathcal{H})\to[0,1]$ on the quantum state space, in direct analogy with a classical fuzzy granule $(X,\mu_A)$, where $X$ is the universe and $\mu_A$ assigns membership grades to its elements.  
Projectors $P=P^\dagger=P^2$ appear as \emph{sharp} granules, with spectrum $\{0,1\}$ and $p_P(\rho)$ acting as a crisp acceptance indicator, while general (non-projective) effects encode \emph{soft}, noisy, and context-dependent acceptance regions.  

Beyond the purely fuzzy case, other classical granular formalisms can be embedded into this effect-based perspective.  
Rough granules, described by lower and upper approximations $(A_L,A_U)$, correspond in the commuting projective case to pairs of projectors $P_L\preceq P_U$, and more generally to pairs of effects $E_L\preceq E_U$ that bound the membership region for a concept.  
Shadowed sets and three-way decisions can be realized by finite families of effects $\{E_{\mathrm{acc}},E_{\mathrm{rej}},E_{\mathrm{und}}\}$ forming a POVM, where each effect defines an acceptance, rejection, or deferment granule at the quantum level.  
MFL-style mediative granules, with positive and negative components $(\mu^+,\mu^-)$, can be modeled by pairs of effects $(E^+,E^-)$ acting on the same state space, so that $p_{E^+}(\rho)$ and $p_{E^-}(\rho)$ play the role of mediative truth values; in the non-commuting case, $[E^+,E^-]\neq 0$ naturally encodes conflict and contextuality.  
Lin's formal words, which represent structured granules built from ordered compositions of simpler units, admit a quantum counterpart as ordered tuples or products of effects (or of POVM elements), where non-commutativity records the order sensitivity of the underlying granular operations.  
More generally, any classical granular construct specified by a finite family of membership-like maps can be transported to the quantum setting by assigning an appropriate finite family of effects on $\mathcal{H}$ whose Born-rule memberships reproduce the intended granular semantics.

From the logical side, the set $\mathrm{Eff}(\mathcal{H})$ of effects, equipped with the partial sum $E\oplus F := E+F$ whenever $E+F\preceq I$, forms an \emph{effect algebra}~\cite{foulis1994}, which provides the algebraic substrate for our granular constructions.

We record an explicit expression for the membership of pure states.

\begin{lemma}[Pure-state form]\label{lem:pure-state-form}
Let $\mathcal{G}=(\mathcal{D}(\mathcal{H}),E,p_E)$ be a quantum granule on $\mathcal{H}$.  
For a pure state $\rho=\ket{\psi}\!\bra{\psi}$ with $\langle\psi|\psi\rangle=1$, the granular membership is
\begin{equation}\label{eq:qg-pure-state}
    p_E(\rho)
    =
    \bra{\psi}E\ket{\psi}.
\end{equation}
In particular, if $E=P$ is a projector, then
\begin{equation}\label{eq:qg-pure-state-projector}
    p_P(\rho)
    =
    \bra{\psi}P\ket{\psi}
    =
    \|P\ket{\psi}\|^2,
\end{equation}
i.e., the squared norm of the projection of $\ket{\psi}$ onto the subspace $\operatorname{ran}(P)$.
\end{lemma}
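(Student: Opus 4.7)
The plan is a direct computation from the definition~\eqref{eq:granule-probability} together with elementary properties of the trace and of orthogonal projectors; no heavy machinery is needed.

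First, I would specialize $p_E(\rho) = \Tr(\rho E)$ to $\rho = \ket{\psi}\!\bra{\psi}$, obtaining $\Tr(\ket{\psi}\!\bra{\psi}\, E)$. By the cyclic property of the trace, this rearranges to $\Tr(\bra{\psi} E \ket{\psi})$; since $\bra{\psi} E \ket{\psi}$ is a scalar, its trace equals itself, which yields~\eqref{eq:qg-pure-state}. To avoid any ambiguity in tracing a scalar, one may equivalently extend $\ket{\psi}$ to an orthonormal basis $\{\ket{e_i}\}_{i=1}^{\dim \mathcal{H}}$ of $\mathcal{H}$ (possible because $\langle\psi|\psi\rangle = 1$), expand $\Tr(\ket{\psi}\!\bra{\psi}\, E) = \sum_i \langle e_i|\psi\rangle\, \langle\psi|E|e_i\rangle$, and observe that only the term with $\ket{e_i}=\ket{\psi}$ survives, again producing $\bra{\psi} E \ket{\psi}$.

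For the projector case, I would invoke the two defining properties of an orthogonal projector, namely $P^\dagger = P$ (self-adjointness) and $P^2 = P$ (idempotence). Starting from $\bra{\psi} P \ket{\psi}$, substitute $P = P^2 = P^\dagger P$ to obtain $\bra{\psi} P^\dagger P \ket{\psi} = \langle P\psi \,|\, P\psi\rangle = \|P\ket{\psi}\|^2$, which is exactly~\eqref{eq:qg-pure-state-projector}. Geometrically this recovers the familiar fact that the Born probability of a sharp granule on a pure state equals the squared length of the orthogonal projection of $\ket{\psi}$ onto $\operatorname{ran}(P)$, consistent with the interpretation of $P$ as a crisp acceptance indicator.

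The main obstacle here is organizational rather than technical: the lemma is essentially a restatement of the Born rule for pure states, so the proof must make the passage from the general trace formula to a pure-state expectation value transparent, while being careful with the rewriting $P = P^\dagger P$ that links the expectation value to the norm of $P\ket{\psi}$ in the projector case. Beyond these points, no nontrivial manipulation is required.
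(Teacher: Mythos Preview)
Your proof is correct and follows essentially the same route as the paper: substitute $\rho=\ket{\psi}\!\bra{\psi}$ into the trace definition and use cyclicity to obtain $\bra{\psi}E\ket{\psi}$, then for the projector case rewrite $P=P^\dagger P$ to identify the expectation value with $\|P\ket{\psi}\|^2$. Your additional remark about expanding in an orthonormal basis containing $\ket{\psi}$ is a harmless elaboration, not a different method.
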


\begin{proof}
Substituting $\rho=\ket{\psi}\!\bra{\psi}$ into~\eqref{eq:granule-probability} and using cyclicity of the trace yields
\[
p_E(\rho)
= \Tr\bigl(\ket{\psi}\!\bra{\psi} E\bigr)
= \bra{\psi}E\ket{\psi}.
\]
If $E=P$ is a projector, then $P^2=P$ and
\[
\bra{\psi}P\ket{\psi}
= \langle P\psi|P\psi\rangle
= \|P\ket{\psi}\|^2.
\]
\end{proof}

To build geometric intuition, Figures~\ref{fig:single-qubit-pure} and~\ref{fig:single-qubit-mixed} illustrate how effects behave as granules for pure and mixed qubit states.
\begin{figure}[t]
    \centering
    \begin{tikzpicture}[scale=2, >=Latex]

\draw[thick] (0,0) circle (1); 
 \draw[->] (-1.2,0) -- (1.2,0) node[below] {$X$};
 \draw[->] (0,-1.2) -- (0,1.2) node[left] {$Z$};

  \fill (0,1) circle (0.03) node[above right] {$\ket{0}$};
  \fill (0,-1) circle (0.03) node[below right] {$\ket{1}$};

  \pgfmathsetmacro{\ang}{40} 
  \coordinate (psi) at ({sin(\ang)},{cos(\ang)}); 
  \draw[thick,->] (0,0) -- (psi);
  \fill (psi) circle (0.03) node[right=2pt] {$\ket{\psi}$};

  \draw[->] (0,0) ++(0,0.25) arc [start angle=90, end angle=50, radius=0.25];
  \node at (0.15,0.55) {$\theta=$};

   
  \node at (0.15,0.40) {$\ang ^\circ$};

  \node[align=left, anchor=west] at (1.30,0.50) {$\displaystyle P_0=\ket{0}\!\bra{0}$\\[2pt] $\displaystyle P_1=\ket{1}\!\bra{1}$};

  \draw[->, dashed, bend left=15] (psi) to (0,1);
  \node[fill=white, inner sep=1pt] at (1.7,1.1){$\;p_0=\langle\psi|P_0|\psi\rangle=\cos^2(\!\frac{\theta}{2})\;$};

  \draw[->,dashed, bend left=15] (psi) to (0,-1);
  \node[fill=white, inner sep=1pt] at (1.7,-1.10){$\;p_1=\langle\psi|P_1|\psi\rangle=\sin^2(\!\frac{\theta}{2})\;$};

\end{tikzpicture}
    \caption{Quantum granules for pure qubit states.  
    The Bloch-sphere color map represents the granular membership $p_\rho(E)$ for a fixed effect $E$.  
    Projective effects (rank-one projectors) yield crisp spherical caps on the Bloch sphere, whereas non-projective ones induce soft granular boundaries.}
    \label{fig:single-qubit-pure}
\end{figure}
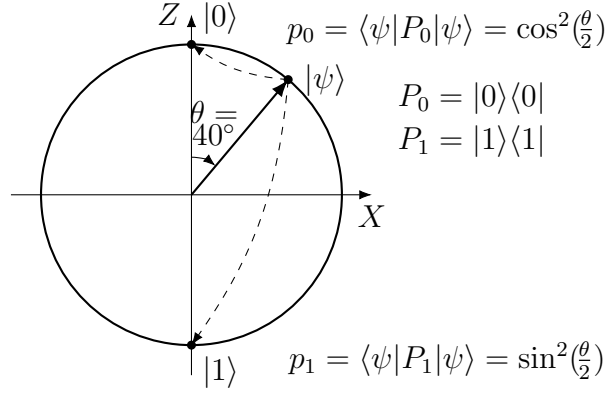

\begin{figure}[t]
    \centering
\newcommand{\BlochScale}{2}
\newcommand{\BlochR}{1.0}     

\begin{tikzpicture}[scale=\BlochScale, >=Latex]

  \def\rmag{0.75}   
  \def\marg{0.25}   
  \pgfmathsetmacro{\ang}{40} 

  \draw[line width=1.1pt] (0,0) circle (\BlochR);

  \draw[->,line width=0.7pt] (-\BlochR-\marg,0) -- (\BlochR+\marg,0) node[below right] {$X$};
  \draw[->,line width=0.9pt] (0,-\BlochR-\marg) -- (0,\BlochR+\marg) node[left] {$Z$};

  \fill (0,\BlochR) circle (0.03pt); \node[above right] at (0,\BlochR) {$\ket{0}$};
  \fill (0,-\BlochR) circle (0.03pt); \node[below right] at (0,-\BlochR) {$\ket{1}$};

  \coordinate (r)    at ({\rmag*\BlochR*sin(\ang)},{\rmag*\BlochR*cos(\ang)});
  \coordinate (rpure) at ({\BlochR*sin(\ang)},{\BlochR*cos(\ang)});
  \draw[line width=1pt] (0,0) -- (r) node[pos=0.58, right] {$\mathbf{r}$};
  \draw[dashed] (0,0) -- (rpure);

  \draw[densely dashed] (r) -- (0,{\rmag*\BlochR*cos(\ang)});
  \node at (-0.18,{\rmag*\BlochR*cos(\ang)/2}) {$r_z$};

  \draw[->] (0,0) ++(0,0.35) arc [start angle=90, end angle=50, radius=0.35];
  \node at (0.08,0.23) {$\theta$};
  \node at (0.15,0.45) {$\ang ^\circ$};

  \node[align=left] at (\BlochR+0.55,0.95) {$\rho=\tfrac12\!\left(I+\mathbf{r}\!\cdot\!\boldsymbol{\sigma}\right)$\\[2pt] $\|\mathbf{r}\|<1$};
  
  \node[align=left] at (\BlochR+0.90,0.50) {$p_0=\mathrm{Tr}(\rho P_0)=\tfrac{1+r_z}{2}$};

  \node[align=left] at (\BlochR+0.90,-0.50) {$p_1=\mathrm{Tr}(\rho P_1)=\tfrac{1-r_z}{2}$};
  
  \node[below right] at (0.10,\BlochR-0.05) {$p_0$};
  \node[above right] at (0.10,-\BlochR+0.05) {$p_1$};
\end{tikzpicture}
    \caption{Quantum granules for mixed qubit states.  
    As the Bloch-vector length $\|\vec{r}\|$ decreases (i.e., as the state becomes more mixed), granular contrasts soften, illustrating how uncertainty reduces the dynamic range of membership values.}
    \label{fig:single-qubit-mixed}
\end{figure}
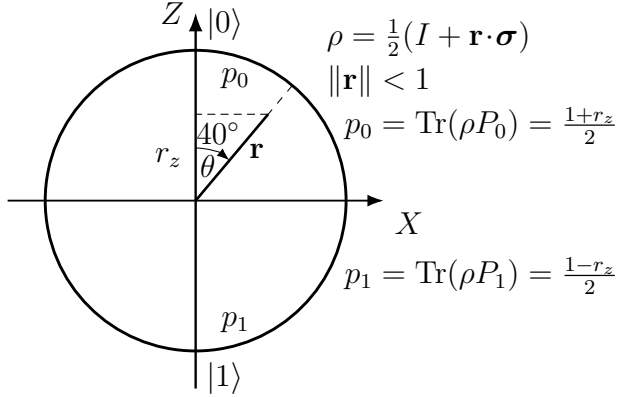

\subsection{Basic algebraic and probabilistic properties}\label{subsec:basic-properties}
We collect fundamental normalization and monotonicity properties of quantum granules, viewed through their membership functions.

\begin{theorem}[Normalization and monotonicity]\label{thm:normalization-monotonicity}
Let $\rho\in\mathcal{D}(\mathcal{H})$ and let $E,F\in\mathrm{Eff}(\mathcal{H})$ be effects (quantum granules) on $\mathcal{H}$.  
Then:
\begin{enumerate}
    \item $0 \le p_\rho(E) \le 1$,
    \item If $E \preceq F$, then $p_\rho(E) \le p_\rho(F)$,
    \item If $\{E_i\}_i$ is a POVM, then $\sum_i p_\rho(E_i)=1$.
\end{enumerate}
\end{theorem}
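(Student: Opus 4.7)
The plan is to reduce all three claims to a single elementary positivity fact: that $\mathrm{Tr}(AB) \ge 0$ whenever $A, B \in \mathrm{L}(\mathcal{H})$ are positive semidefinite. I would first record this fact explicitly by writing $A = A^{1/2} A^{1/2}$ (the positive square root exists because $A \succeq 0$), applying cyclicity of the trace to get $\mathrm{Tr}(AB) = \mathrm{Tr}(A^{1/2} B A^{1/2})$, and noting that $A^{1/2} B A^{1/2}$ is itself positive semidefinite (a congruence of a positive operator by the Hermitian matrix $A^{1/2}$), so its trace is a sum of nonnegative eigenvalues.

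With this tool in hand, each item becomes a short calculation. For item (1), the lower bound $p_\rho(E) \ge 0$ is immediate from the positivity lemma applied to $\rho \succeq 0$ and $E \succeq 0$. For the upper bound I would use that $E \preceq I$ is equivalent to $I - E \succeq 0$; the lemma applied to $\rho$ and $I - E$ gives $\mathrm{Tr}\bigl(\rho(I - E)\bigr) \ge 0$, and linearity of the trace together with the normalization $\mathrm{Tr}(\rho) = 1$ then yields $p_\rho(E) \le 1$.

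For item (2), the hypothesis $E \preceq F$ rewrites as $F - E \succeq 0$, so linearity of trace gives $p_\rho(F) - p_\rho(E) = \mathrm{Tr}\bigl(\rho(F - E)\bigr)$, which is $\ge 0$ by the same lemma. Item (3) follows directly from linearity of the trace applied to the POVM completeness relation $\sum_i E_i = I$: I would commute sum and trace to obtain $\sum_i p_\rho(E_i) = \mathrm{Tr}\!\bigl(\rho \sum_i E_i\bigr) = \mathrm{Tr}(\rho) = 1$, using the normalization $\mathrm{Tr}(\rho) = 1$ in the last equality.

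The main obstacle here is expository rather than mathematical: ensuring that the positivity-of-trace fact is invoked uniformly so that all three items appear as instances of the same principle, namely the trace duality between $\mathcal{D}(\mathcal{H})$ and $\mathrm{Eff}(\mathcal{H})$ already implicit in Section~\ref{subsec:states-effects}. No deeper machinery (spectral decompositions of $\rho$, Cauchy--Schwarz, or operator convexity arguments) should be required beyond the square-root step, and the proof should be presentable in a handful of lines.
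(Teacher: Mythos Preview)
Your proposal is correct and follows essentially the same route as the paper's proof: both reduce (1) and (2) to the nonnegativity of $\mathrm{Tr}(\rho X)$ for $\rho,X\succeq 0$ (applied to $X=E$, $X=I-E$, and $X=F-E$), and obtain (3) from linearity of the trace and $\sum_i E_i=I$. The only difference is expository: you explicitly justify the positivity fact via the square-root/cyclicity argument, whereas the paper simply invokes it.
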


\begin{proof}
Since $0\preceq E\preceq I$ and $\rho\succeq 0$ with $\Tr(\rho)=1$, one has
\[
0 \;\le\; \Tr(\rho E) \;\le\; \Tr(\rho I) = 1,
\]
which proves~(1). For~(2), if $E\preceq F$ then $F-E\succeq 0$, and hence
\[
p_\rho(F)-p_\rho(E) = \Tr\!\bigl(\rho(F-E)\bigr) \;\ge\; 0,
\]
so $p_\rho(E)\le p_\rho(F)$. For~(3), linearity of the trace and $\sum_i E_i=I$ give
\[
\sum_i p_\rho(E_i)
= \sum_i \Tr(\rho E_i)
= \Tr\!\Bigl(\rho \sum_{i} E_i\Bigr)
= \Tr(\rho I)
= 1.
\]
\end{proof}

Equivalently, for each granule $\mathcal{G}=(\mathcal{D}(\mathcal{H}),E,p_E)$, Theorem~\ref{thm:normalization-monotonicity} states that $p_E$ is a normalized, order-preserving membership map on the state space $\mathcal{D}(\mathcal{H})$.

Theorem~\ref{thm:normalization-monotonicity} shows that quantum granules behave as generalized membership functions: 
their memberships are confined to $[0,1]$, respect the natural order on effects, and form a normalized probability distribution when arising from a POVM. 
In this sense, $p_\rho(E)$ plays the role of a fuzzy-style degree of membership, but induced by the Born rule on operator-valued granules.

Classical coarse-graining of partition blocks is recovered by coarse-graining commuting projective granules.

\begin{proposition}[Coarse-to-fine under commutation]\label{prop:coarse-to-fine}
Let $\{P_i\}$ be pairwise orthogonal, commuting projectors with $\sum_i P_i = I$ (a PVM).  
For any index set $S$, define
\begin{equation}\label{eq:qg-coarse-granule}
    P_S := \sum_{i\in S} P_i.
\end{equation}
Then $P_S$ is a projector that defines a coarser granule with membership
\begin{equation}\label{eq:qg-coarse-membership}
    p_{P_S}(\rho) = \sum_{i\in S} \Tr(\rho P_i),
\end{equation}
for all $\rho\in\mathcal{D}(\mathcal{H})$.
\end{proposition}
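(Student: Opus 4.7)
The plan is to verify in sequence that (i) $P_S$ is a self-adjoint projector, (ii) $P_S$ lies in $\mathrm{Eff}(\mathcal{H})$ so that it qualifies as a quantum granule in the sense of Definition~\ref{def:quantum-granule}, and (iii) its membership takes the claimed additive form. The only algebraic facts I will need beyond these definitions are that each $P_i$ is self-adjoint and that, as a PVM, the family satisfies $P_i P_j = \delta_{ij} P_i$.

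For (i), self-adjointness of $P_S$ follows at once from linearity of the adjoint and $P_i^\dagger = P_i$. Idempotency will be obtained by expanding $P_S^2 = \sum_{i,j\in S} P_i P_j$ and collapsing the double sum to $\sum_{i\in S} P_i = P_S$ via pairwise orthogonality. For (ii), positivity is automatic because $P_S$ is a finite sum of positive operators; the upper bound $P_S \preceq I$ follows by writing $I - P_S = \sum_{i\notin S} P_i$ and observing that this residual sum is also a positive operator. Together these give $0 \preceq P_S \preceq I$, so $P_S$ is indeed an effect and the triple $(\mathcal{D}(\mathcal{H}), P_S, p_{P_S})$ is a bona fide quantum granule.

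For (iii), I apply the Born-rule formula~\eqref{eq:granule-probability} with $E = P_S$, substitute the definition~\eqref{eq:qg-coarse-granule}, and pull the finite sum outside the trace by linearity to obtain $p_{P_S}(\rho) = \sum_{i\in S}\mathrm{Tr}(\rho P_i)$, which is exactly~\eqref{eq:qg-coarse-membership}. The interpretation of $P_S$ as a \emph{coarser} granule is then consistent with Theorem~\ref{thm:normalization-monotonicity}(2): since $P_i \preceq P_S$ for every $i\in S$, the membership of any state in the coarsened granule dominates its membership in each constituent block, mirroring the classical intuition that merging partition cells enlarges acceptance regions.

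No step is genuinely delicate; the only point that deserves care is that the collapse of $\sum_{i,j\in S} P_i P_j$ to $\sum_{i\in S} P_i$ uses the full orthogonality relation $P_i P_j = \delta_{ij} P_i$, not just pairwise commutation. I will therefore emphasize that it is this stronger PVM condition—not commutativity alone—that drives the coarse-graining identity, which is precisely what restricts this clean additive behavior to \emph{commuting} (in fact, orthogonal projective) families and fails in general non-commuting effect settings.
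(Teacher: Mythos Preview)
Your proof is correct and follows the same approach as the paper: establish that $P_S$ is a projector via pairwise orthogonality, then obtain the membership formula by linearity of the trace. You are simply more explicit than the paper (which compresses step~(i) to a single sentence and omits your step~(ii), since any projector is automatically an effect), and your closing remark that it is orthogonality rather than mere commutativity that drives the idempotency of $P_S$ is a worthwhile clarification.
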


\begin{proof}
Orthogonality and commutativity of the $P_i$ imply that $P_S$ is a projector.  
Linearity of the trace gives
\[
p_{P_S}(\rho)
= \Tr(\rho P_S)
= \Tr\!\Bigl(\rho \sum_{i\in S} P_i\Bigr)
= \sum_{i\in S} \Tr(\rho P_i).
\]
\end{proof}

Proposition~\ref{prop:coarse-to-fine} generalizes the coarse-to-fine behavior of classical partitions: a union of blocks corresponds to a coarser granule whose membership function is the sum of the memberships of the constituent blocks.

\subsection{Boolean islands of commuting granules}\label{subsec:boolean-islands}

When a family of quantum granules commutes, it behaves classically.  
This provides a structural bridge between classical and quantum granulation, echoing the emergence of Boolean subalgebras in the lattice of subspaces~\cite{birkhoff1936}.

\begin{theorem}[Boolean islands]\label{thm:boolean-islands}
Let $\mathcal{E}=\{E_1,\dots,E_m\}\subset\mathrm{Eff}(\mathcal{H})$ be a finite family of pairwise commuting quantum granules.  
Then there exists a finite sample space $\Omega$, an algebra of subsets $\mathcal{F}$ of $\Omega$, and, for each state $\rho$, a probability measure $\mu_\rho$ on $(\Omega,\mathcal{F})$ such that:
\begin{enumerate}
    \item each $E_j$ is represented by a classical function $f_j:\Omega\to[0,1]$,
    \item $p_\rho(E_j)=\mu_\rho(f_j)$, where $\mu_\rho(f_j)=\sum_{\omega\in\Omega} f_j(\omega)\,\mu_\rho(\{\omega\})$,
    \item the projection lattice generated by the spectral projectors of the $E_j$ is (via the above identification) isomorphic to a Boolean algebra of subsets of $\Omega$.
\end{enumerate}
In the projective case, the $f_j$ reduce to characteristic functions $\mathbf{1}_{A_j}$ of events $A_j\in\mathcal{F}$, and $p_\rho(E_j)=\mu_\rho(A_j)$.
\end{theorem}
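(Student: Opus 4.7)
The plan is to invoke the joint spectral theorem for a commuting family of self-adjoint operators and use it to build a canonical classical probability model around $\{E_1,\dots,E_m\}$. Since each $E_j$ is self-adjoint and they pairwise commute, they are simultaneously diagonalizable. Concretely, let $E_j = \sum_{\lambda} \lambda\, P^{(j)}_\lambda$ be the spectral decomposition of $E_j$, with all $P^{(j)}_\lambda$ mutually commuting across $j$ by hypothesis. The nonzero products $\Pi_\omega := P^{(1)}_{\lambda_1}\cdots P^{(m)}_{\lambda_m}$, indexed by joint spectral values $\omega=(\lambda_1,\dots,\lambda_m)$, form a finite family of mutually orthogonal projectors with $\sum_\omega \Pi_\omega = I$. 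I would take $\Omega$ to be this (finite) set of joint spectral values, $\mathcal{F}=2^\Omega$, and define $\mu_\rho(\{\omega\}) := \Tr(\rho\,\Pi_\omega)$. Positivity and normalization follow from $\rho\succeq 0$, $\Tr(\rho)=1$, and the fact that $\{\Pi_\omega\}$ is a PVM, exactly as in Theorem~\ref{thm:normalization-monotonicity}(3).

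With this setup, I would define $f_j:\Omega\to[0,1]$ by $f_j(\omega)=\lambda_j$, the $j$-th component of $\omega$. The bound $0\le f_j(\omega)\le 1$ is forced by $0\preceq E_j\preceq I$, which confines the spectrum of $E_j$ to $[0,1]$; this establishes (1). For (2), I would expand
\begin{equation*}
p_\rho(E_j)
= \Tr\!\Bigl(\rho \sum_\omega f_j(\omega)\,\Pi_\omega\Bigr)
= \sum_{\omega\in\Omega} f_j(\omega)\,\Tr(\rho\,\Pi_\omega)
= \sum_{\omega\in\Omega} f_j(\omega)\,\mu_\rho(\{\omega\})
= \mu_\rho(f_j),
\end{equation*}
using linearity of the trace and the joint spectral expansion $E_j=\sum_\omega f_j(\omega)\,\Pi_\omega$.

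For (3), I would argue that every spectral projector $P^{(j)}_\lambda$ of any $E_j$ can be written as $\sum_{\omega\in A^{(j)}_\lambda}\Pi_\omega$ for an appropriate subset $A^{(j)}_\lambda\subseteq\Omega$. Hence the projection lattice generated by $\{P^{(j)}_\lambda\}$ under meet, join, and orthocomplement (cf.\ Sec.~\ref{subsec:lattice}) is in bijection with the Boolean subalgebra of $2^\Omega$ generated by the $A^{(j)}_\lambda$: orthogonal meets become intersections, joins become unions, and $P^\perp=I-P$ becomes set complement. Because the $\Pi_\omega$ are pairwise orthogonal and sum to $I$, this assignment is a lattice isomorphism onto its image, and distributivity holds because it is inherited from the Boolean algebra of subsets. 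The projective special case then drops out automatically: if $E_j$ is a projector, its spectrum is contained in $\{0,1\}$, so $f_j$ equals the indicator of $A_j:=\{\omega:f_j(\omega)=1\}\in\mathcal{F}$, and $p_\rho(E_j)=\mu_\rho(A_j)$.

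The main obstacle I anticipate is not any single calculation but the careful bookkeeping of the joint spectral decomposition: one has to verify that the products $\Pi_\omega$ are indeed orthogonal projectors summing to $I$ (which hinges crucially on pairwise commutativity), and that every element of the projection lattice generated by the spectral projectors can be recovered as a union of atoms $\Pi_\omega$. Once this atomic structure is in place, the translation to a Boolean algebra of subsets of $\Omega$ and the probabilistic identity $p_\rho(E_j)=\mu_\rho(f_j)$ follow by routine linear algebra and the Born rule.
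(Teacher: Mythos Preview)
Your proposal is correct and follows essentially the same approach as the paper's proof sketch: both rely on simultaneous diagonalization of the commuting self-adjoint effects to obtain a classical sample space $\Omega$, with the paper phrasing this via the commutative $*$-subalgebra generated by $\mathcal{E}$ and you phrasing it via the explicit joint spectral atoms $\Pi_\omega$. Your version is in fact more detailed and constructive than the paper's sketch, and the bookkeeping you flag (orthogonality and completeness of the $\Pi_\omega$, recovery of spectral projectors as unions of atoms) is exactly the content that the paper leaves implicit.
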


\begin{proof}[Proof sketch]
Let $\mathcal{A}$ be the unital $*$-subalgebra of $\mathrm{L}(\mathcal{H})$ generated by $\mathcal{E}$.  
Since the effects $E_j$ commute pairwise, every element of $\mathcal{A}$ can be simultaneously diagonalized in some orthonormal basis of $\mathcal{H}$.  
In that basis, $\mathcal{A}$ is identified with the algebra of diagonal matrices, which is $*$-isomorphic to an algebra of functions on a finite set $\Omega$~\cite{nielsen2010,heinosaari2012}.  
Each $E_j$ corresponds to a multiplication operator by a function $f_j:\Omega\to[0,1]$, and in the projective case, to a characteristic function $\mathbf{1}_{A_j}$ of some $A_j\subseteq\Omega$.  
The state $\rho$ induces a probability measure $\mu_\rho$ on $\Omega$, and $p_\rho(E_j)=\mu_\rho(f_j)$.  
The sets generated by the supports of the $f_j$ form a Boolean algebra of events.
\end{proof}

Viewed through QGC, commuting granules form \emph{Boolean islands} where classical granular reasoning is valid; incompatibility arises precisely when granules fail to commute.  
Within such an island, the granules $\{E_j\}$ can be identified with fuzzy-style sets $f_j:\Omega\to[0,1]$ under the classical probability measure $\mu_\rho$ (for a fixed state $\rho$), so that granular memberships and their combinations reduce to standard probabilistic reasoning on a Boolean algebra.

\subsection{Granular refinement under projective measurements}\label{subsec:luders-refinement}

The L\"uders measurement update for projective measurements describes how quantum states change upon observing an outcome. 
In our setting, it induces a decomposition of a granule's membership into conditioned components, providing a quantum granular analogue of classical conditioning.

\begin{theorem}[L\"uders granular refinement]\label{thm:luders-refinement}
Let $\{P_i\}_i$ be a projective measurement and let $\rho\in\mathcal{D}(\mathcal{H})$.  
Define $p_i=\Tr(\rho P_i)$ and, for $p_i>0$,
\begin{equation}\label{eq:luders-state}
    \rho_i = \frac{P_i \rho P_i}{p_i}.
\end{equation}
Let
\begin{equation}\label{eq:luders-nonselective}
    \rho' \;:=\; \sum_i P_i \rho P_i
           \;=\; \sum_i p_i\,\rho_i
\end{equation}
denote the post-measurement state corresponding to a nonselective L\"uders update.  
Then, for any effect $E\in\mathrm{Eff}(\mathcal{H})$ (quantum granule),
\begin{equation}\label{eq:luders-refinement}
    p_{\rho'}(E)
    \;=\;
    \sum_i p_i\, p_{\rho_i}(E).
\end{equation}
Moreover, if $E$ commutes with all $P_i$, this identity reduces to the classical law of total probability in the joint eigenbasis of $\{P_i\}$ and $E$~\cite{nielsen2010,heinosaari2012,busch2016}.
\end{theorem}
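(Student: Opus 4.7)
The plan is to derive the identity directly from linearity of the trace applied to the convex decomposition $\rho' = \sum_i p_i \rho_i$ already displayed in (\ref{eq:luders-nonselective}), and then to interpret the commuting case via simultaneous diagonalization of $E$ with the family $\{P_i\}$.

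First I would confirm that $\rho' \in \mathcal{D}(\mathcal{H})$: each $P_i \rho P_i \succeq 0$, and cyclicity together with $P_i^2 = P_i$ and $\sum_i P_i = I$ yields $\Tr(\rho') = \sum_i \Tr(\rho P_i) = \Tr(\rho) = 1$. The main identity then follows in one line: by the membership definition (\ref{eq:granule-probability}) and linearity of the trace,
\[
p_{\rho'}(E) = \Tr(\rho' E) = \sum_i p_i \Tr(\rho_i E) = \sum_i p_i\, p_{\rho_i}(E),
\]
which is (\ref{eq:luders-refinement}). Indices with $p_i = 0$ correspond to $P_i \rho P_i = 0$ (a consequence of positivity combined with $\Tr(P_i \rho P_i) = p_i$), so these terms vanish identically and the formula is well defined regardless of how $\rho_i$ is conventionally extended to that case.

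For the classical reduction, when $[E, P_i] = 0$ for all $i$, the family $\{E\} \cup \{P_i\}$ admits a joint orthonormal eigenbasis $\{\ket{k}\}$. In that basis each $P_i$ is the projector onto a subspace indexed by a block $A_i$ of the partition $\bigsqcup_i A_i$ of eigenindices, and $E = \sum_k e_k \ket{k}\bra{k}$ with $e_k \in [0,1]$. Setting $q_k := \bra{k}\rho\ket{k}$, one recognizes $p_i = \sum_{k \in A_i} q_k$ and $p_{\rho_i}(E) = p_i^{-1} \sum_{k \in A_i} e_k q_k$ as the classical marginal probability and conditional expectation on the index set, so that (\ref{eq:luders-refinement}) takes the form $\mathbb{E}_\rho[E] = \sum_i \mu_\rho(A_i)\, \mathbb{E}_\rho[E \mid A_i]$, the law of total expectation associated with the PVM partition. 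This is precisely the classical instance mentioned in Theorem~\ref{thm:boolean-islands} on Boolean islands.

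The main obstacle is not algebraic but presentational: the core identity is an immediate consequence of trace linearity applied to the given convex decomposition, and the substantive content lies in interpreting it as a quantum granular analogue of conditioning. The care required is mostly bookkeeping: handling $p_i = 0$ cleanly so that $\sum_i p_i \rho_i$ is unambiguous, and translating operator-level quantities (spectral projectors, traces of products) into the classical conditional probabilities they represent once all observables commute.
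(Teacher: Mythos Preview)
Your proposal is correct and follows essentially the same approach as the paper: both derive the identity from linearity of the trace applied to the decomposition of $\rho'$ (you use the convex form $\sum_i p_i\rho_i$ directly, the paper passes through $\sum_i P_i\rho P_i$ first, which is the same thing), handle the $p_i=0$ case by noting the corresponding term vanishes, and treat the commuting case via simultaneous diagonalization. Your version is slightly more detailed in verifying $\rho'\in\mathcal{D}(\mathcal{H})$ and in spelling out the classical conditional-expectation reading, but the argument is the same.
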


\begin{proof}
By definition of $\rho'$,
\[
p_{\rho'}(E)
= \Tr(\rho' E)
= \Tr\Bigl(\sum_i P_i \rho P_i E\Bigr)
= \sum_i \Tr(P_i \rho P_i E).
\]
For indices with $p_i=0$ the corresponding terms vanish and can be omitted.
For $p_i>0$ we have $p_i=\Tr(P_i\rho P_i)$ and
\[
\Tr(P_i \rho P_i E)
= p_i\,\Tr(\rho_i E)
= p_i\,p_{\rho_i}(E),
\]
so
\[
p_{\rho'}(E)
= \sum_i p_i\, p_{\rho_i}(E),
\]
which proves~\eqref{eq:luders-refinement}.  
In the commuting case $[E,P_i]=0$ for all $i$, $E$ and $\{P_i\}$ are simultaneously diagonalizable; in that basis,~\eqref{eq:luders-refinement} is exactly the classical law of total probability for the fuzzy set defined by the eigenvalues of $E$~\cite{nielsen2010,heinosaari2012,busch2016}.
\end{proof}

Theorem~\ref{thm:luders-refinement} shows that the membership of the post-measurement state $\rho'$ in a granule $E$ can be decomposed as a mixture of conditional memberships $p_{\rho_i}(E)$ given the outcomes of a PVM. 
This provides a quantum granular analogue of classical conditioning: after the measurement, $p_{\rho'}(E)$ is resolved into branches indexed by $i$, each weighted by $p_i$ and evaluated under the conditioned state $\rho_i$.

\subsection{Evolution of quantum granules under channels}\label{subsec:channel-evolution}

The Heisenberg adjoint of a quantum channel preserves granular structure and determines how granules transform under noisy dynamics~\cite{heinosaari2012,busch2016}.

\begin{theorem}[Granular evolution under quantum channels]\label{thm:channel-evolution}
Let $\mathcal{E} : \mathrm{L}(\mathcal{H}_{\mathrm{in}})\to\mathrm{L}(\mathcal{H}_{\mathrm{out}})$ be a quantum channel, and let $E\in\mathrm{Eff}(\mathcal{H}_{\mathrm{out}})$ be a quantum granule (effect) on the output space.  
Then:
\begin{enumerate}
    \item $\mathcal{E}^\dagger(E)$ is a granule on $\mathcal{H}_{\mathrm{in}}$, i.e., $\mathcal{E}^\dagger(E)\in\mathrm{Eff}(\mathcal{H}_{\mathrm{in}})$,
    \item If $E,F\in\mathrm{Eff}(\mathcal{H}_{\mathrm{out}})$ with $E\preceq F$, then $\mathcal{E}^\dagger(E)\preceq \mathcal{E}^\dagger(F)$,
    \item 3. For all $\rho\in\mathcal{D}(\mathcal{H}_{\mathrm{in}})$,
\begin{equation}\label{eq:channel-membership-duality}
    p_{\mathcal{E}(\rho)}(E) = p_\rho(\mathcal{E}^\dagger(E)).
\end{equation}

\end{enumerate}
\end{theorem}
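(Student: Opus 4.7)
The plan is to dispatch the three claims in the order listed, leveraging the Kraus representation \eqref{eq:kraus-representation}, the adjoint formula \eqref{eq:heisenberg-adjoint}, the trace duality \eqref{eq:trace-duality-channel}, and unitality \eqref{eq:adjoint-unital} recorded in Section~\ref{subsec:open-systems}. All three parts are essentially direct consequences of these structural properties of $\mathcal{E}^\dagger$, so the proof is really a matter of invoking them cleanly rather than constructing anything new.

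For part~(1), I would verify the two bounds in the definition of an effect separately. Positivity of $\mathcal{E}^\dagger(E)$ follows because $E \succeq 0$ together with the Kraus form $\mathcal{E}^\dagger(E) = \sum_k K_k^\dagger E K_k$ yields a sum of operators of the form $K_k^\dagger E K_k$, each of which is positive since for any $\ket{\varphi}\in\mathcal{H}_{\mathrm{in}}$ we have $\bra{\varphi}K_k^\dagger E K_k\ket{\varphi} = \bra{K_k\varphi}E\ket{K_k\varphi} \ge 0$. For the upper bound, I would apply the same argument to $I_{\mathrm{out}} - E \succeq 0$ and then use unitality \eqref{eq:adjoint-unital}, i.e., $\mathcal{E}^\dagger(I_{\mathrm{out}}) = I_{\mathrm{in}}$, to conclude $\mathcal{E}^\dagger(E) \preceq I_{\mathrm{in}}$.

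Part~(2) reduces to part~(1): if $E \preceq F$ then $F - E \succeq 0$, so by the same positivity argument $\mathcal{E}^\dagger(F-E) \succeq 0$, and linearity of $\mathcal{E}^\dagger$ gives $\mathcal{E}^\dagger(E) \preceq \mathcal{E}^\dagger(F)$. Part~(3) is an immediate restatement of the trace duality \eqref{eq:trace-duality-channel}: unpacking the definitions,
\[
p_{\mathcal{E}(\rho)}(E) = \Tr\!\bigl(\mathcal{E}(\rho)\,E\bigr) = \Tr\!\bigl(\rho\,\mathcal{E}^\dagger(E)\bigr) = p_\rho\!\bigl(\mathcal{E}^\dagger(E)\bigr),
\]
which can alternatively be checked directly from the Kraus representation using cyclicity of the trace: $\Tr(\sum_k K_k\rho K_k^\dagger E) = \sum_k \Tr(\rho K_k^\dagger E K_k) = \Tr(\rho\,\mathcal{E}^\dagger(E))$.

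There is no genuine obstacle here; the main point is conceptual rather than technical. The only subtlety worth flagging explicitly is that unitality of $\mathcal{E}^\dagger$ is exactly the dual statement of trace preservation of $\mathcal{E}$, so part~(1) genuinely uses the CPTP hypothesis and not merely complete positivity. I would therefore structure the write-up to make this dependence visible, so that the reader sees how each of the three granular properties (being an effect, being order-preserving, and satisfying the Born-rule duality) is underwritten by a distinct structural feature of quantum channels: complete positivity, linearity plus positivity, and trace duality, respectively.
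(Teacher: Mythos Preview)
Your proposal is correct and follows essentially the same route as the paper: positivity of $\mathcal{E}^\dagger$ gives $\mathcal{E}^\dagger(E)\succeq 0$, unitality (dual to trace preservation) gives the upper bound, linearity plus positivity handle monotonicity, and trace duality yields part~(3). The only cosmetic difference is that you compute positivity explicitly via the Kraus form $\sum_k K_k^\dagger E K_k$ and a vector sandwich, whereas the paper simply invokes complete positivity of $\mathcal{E}^\dagger$ abstractly; both are the same argument at slightly different levels of abstraction, and your closing remark tying each part to a specific structural feature of CPTP maps is a nice touch.
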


\begin{proof}
(1) Since $\mathcal{E}$ is trace preserving, $\mathcal{E}^\dagger$ is completely positive and unital, i.e., $\mathcal{E}^\dagger(I_{\mathrm{out}})=I_{\mathrm{in}}$.  
Complete positivity of $\mathcal{E}^\dagger$ implies $\mathcal{E}^\dagger(E)\succeq 0$ whenever $E\succeq 0$, and unitality ensures $\mathcal{E}^\dagger(E)\preceq I_{\mathrm{in}}$ if $E\preceq I_{\mathrm{out}}$, so $\mathcal{E}^\dagger(E)\in\mathrm{Eff}(\mathcal{H}_{\mathrm{in}})$.

(2) If $E\preceq F$, then $F-E\succeq 0$ and complete positivity yields $\mathcal{E}^\dagger(F-E)\succeq 0$, i.e.,
\[
\mathcal{E}^\dagger(F)-\mathcal{E}^\dagger(E) = \mathcal{E}^\dagger(F-E)\succeq 0,
\]
so $\mathcal{E}^\dagger(E)\preceq \mathcal{E}^\dagger(F)$.

(3) Equation~\eqref{eq:channel-membership-duality} follows from Schr\"odinger–Heisenberg duality:
\[
p_{\mathcal{E}(\rho)}(E)
= \Tr\!\bigl(\mathcal{E}(\rho)\,E\bigr)
= \Tr\!\bigl(\rho\,\mathcal{E}^\dagger(E)\bigr)
= p_\rho\bigl(\mathcal{E}^\dagger(E)\bigr).
\]
\end{proof}

Theorem~\ref{thm:channel-evolution} shows that the effect of noise can be absorbed either into the state (Schr\"odinger picture) or into the granule (Heisenberg picture) without changing the resulting membership degrees.  
Operationally, measuring $E$ after a noisy evolution $\mathcal{E}$ is equivalent to measuring the \emph{dressed} granule $\tilde E=\mathcal{E}^\dagger(E)$ on the ideal state $\rho$.  
This perspective is important for NISQ-oriented implementations, where noise is unavoidable but can be modeled as a deformation of ideal granular structures.

\subsection{Optimal decision granules}\label{subsec:helstrom}

Binary quantum decisions admit a natural interpretation in terms of an \emph{optimal decision granule} described by Helstrom’s theorem~\cite{helstrom1969,holevo2011}.

\begin{theorem}[Optimal binary decision granule]\label{thm:helstrom}
Let $\rho_0,\rho_1$ be two states with priors $\pi_0,\pi_1>0$, $\pi_0+\pi_1=1$.  
Define
\begin{equation}\label{eq:delta-helstrom}
    \Delta = \pi_0 \rho_0 - \pi_1 \rho_1.
\end{equation}
Let $E^\star$ be the projector onto the positive spectrum of $\Delta$.  
Then:
\begin{enumerate}
    \item $E^\star$ is a (sharp) quantum granule,
    \item $E^\star$ maximizes the success probability
    \begin{equation}\label{eq:helstrom-success}
        P_{\mathrm{succ}}(E) 
        = \pi_0\, p_{\rho_0}(E) + \pi_1\, p_{\rho_1}(I-E),
    \end{equation}
    over all effects $E\in\mathrm{Eff}(\mathcal{H})$,
    \item The optimal value is
    \begin{equation}\label{eq:helstrom-opt}
        P_{\mathrm{succ}}(E^\star)
        = \tfrac{1}{2}\bigl(1 + \|\Delta\|_1 \bigr).
    \end{equation}
\end{enumerate}
\end{theorem}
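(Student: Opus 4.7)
The plan is to reduce Helstrom's theorem to a simple variational problem built on the Jordan decomposition of $\Delta$. First I would observe that $\Delta=\pi_0\rho_0-\pi_1\rho_1$ is self-adjoint, so the spectral theorem yields a decomposition $\Delta=\Delta_+-\Delta_-$ with $\Delta_+,\Delta_-\succeq 0$ and $\Delta_+\Delta_-=0$ (their supports are mutually orthogonal). The operator $E^\star$, defined as the spectral projector onto the positive eigenspace of $\Delta$, is then an orthogonal projection and in particular an effect, establishing claim (1) and showing that $E^\star$ is a sharp granule in the sense of Definition~\ref{def:quantum-granule}.

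For claim (2), I would recast the objective in a form linear in $E$. Using $p_{\rho_1}(I-E)=1-\Tr(\rho_1 E)$ and linearity of the trace,
\begin{equation*}
P_{\mathrm{succ}}(E)=\pi_1+\Tr(\Delta E),
\end{equation*}
so maximizing $P_{\mathrm{succ}}$ over $\mathrm{Eff}(\mathcal{H})$ is equivalent to maximizing $\Tr(\Delta E)$ subject to $0\preceq E\preceq I$. Splitting $\Tr(\Delta E)=\Tr(\Delta_+E)-\Tr(\Delta_-E)$ and using $\Delta_\pm\succeq 0$ together with $0\preceq E\preceq I$ yields the upper bound $\Tr(\Delta E)\le\Tr(\Delta_+E)\le\Tr(\Delta_+)$. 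Saturation at $E^\star$ follows from two orthogonality facts: since $E^\star$ acts as the identity on $\operatorname{ran}(\Delta_+)$, one has $\Delta_+E^\star=\Delta_+$; and since $\operatorname{ran}(\Delta_-)\perp\operatorname{ran}(\Delta_+)$, one has $\Delta_-E^\star=0$. Hence $\Tr(\Delta E^\star)=\Tr(\Delta_+)$ attains the bound.

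For claim (3), I would combine the standard identities $\|\Delta\|_1=\Tr(\Delta_+)+\Tr(\Delta_-)$ and $\Tr(\Delta)=\Tr(\Delta_+)-\Tr(\Delta_-)=\pi_0-\pi_1$ to deduce $\Tr(\Delta_+)=\tfrac12(\|\Delta\|_1+\pi_0-\pi_1)$. Substituting into $P_{\mathrm{succ}}(E^\star)=\pi_1+\Tr(\Delta_+)$ and using $\pi_0+\pi_1=1$ gives $P_{\mathrm{succ}}(E^\star)=\tfrac12(1+\|\Delta\|_1)$, which is~\eqref{eq:helstrom-opt}.

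The only delicate point is justifying the operator inequality $\Tr(\Delta_+E)\le\Tr(\Delta_+)$ without assuming that $E$ commutes with $\Delta_+$. I would handle this by noting that $\Delta_+^{1/2}(I-E)\Delta_+^{1/2}\succeq 0$ and taking traces, which immediately yields $\Tr(\Delta_+E)\le\Tr(\Delta_+)$ by cyclicity. Once this is in place, the remaining steps are purely linear-algebraic manipulations of the Jordan decomposition and the trace norm.
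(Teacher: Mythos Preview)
Your proof is correct and follows essentially the same approach as the paper's sketch: both rewrite $P_{\mathrm{succ}}(E)$ as a constant plus $\Tr(\Delta E)$ (the paper uses the equivalent symmetric form $\tfrac12(1+\Tr(\Delta(2E-I)))$), then maximize over effects using the spectral/Jordan decomposition of $\Delta$. Your version is in fact more complete than the paper's sketch, since you explicitly handle the non-commuting case via $\Delta_+^{1/2}(I-E)\Delta_+^{1/2}\succeq 0$ rather than appealing to an eigenbasis expansion.
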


\begin{proof}[Proof sketch]
Diagonalize $\Delta=\sum_j \lambda_j \Pi_j$ and expand $E$ in the same eigenbasis.  
The functional in~\eqref{eq:helstrom-success} can be written as
\[
P_{\mathrm{succ}}(E) 
= \tfrac12\bigl(1 + \Tr(\Delta(2E-I))\bigr),
\]
and is maximized by taking $E^\star=\sum_{\lambda_j>0}\Pi_j$, i.e., by accepting the positive eigenspaces of $\Delta$.  
Evaluating the optimum yields Eq.~\eqref{eq:helstrom-opt}; see~\cite{helstrom1969,holevo2011} for full details.
\end{proof}

From the QGC perspective, Theorem~\ref{thm:helstrom} shows that $E^\star$ is an optimal quantum granule that partitions the space into ``decide $\rho_0$'' vs.\ ``decide $\rho_1$'' according to Bayes risk.  
Any decision rule of the form $\{E,I-E\}$ can be viewed as a two-outcome POVM, where $E$ represents an acceptance granule for hypothesis $\rho_0$ and $I-E$ for $\rho_1$.  
The Helstrom granule $E^\star$ is optimal among all such effect-based rules, providing a canonical example of operator-valued granular decision-making and a template for fuzzy-like quantum decision schemes in QGC.

\section{Quantum Granular Decision Systems (QGDS)}\label{sec:qgds}

Quantum Granular Decision Systems (QGDS) provide an operational framework for decision-making in Quantum Granular Computing.  
A QGDS integrates effect-based granules, measurement-induced partitions, and classical aggregation rules into a single inference pipeline.  
It generalizes classical granular decision systems—such as fuzzy or MFL-based architectures—by replacing set-theoretic granules with quantum effects and by allowing decisions to depend on non-commuting operators, contextual information, and quantum state representations.

A QGDS operates on quantum states encoded from data and processes them through a prescribed sequence of granular evaluations.  
Each evaluation is specified by an effect (quantum granule), and the corresponding Born probabilities are interpreted as membership scores, which are subsequently aggregated into a final decision or inference.  
This construction extends classical soft-decision schemes to the operator domain and links them directly with optimality results from quantum detection theory (Sec.~\ref{subsec:helstrom}), where Helstrom-type decision granules provide canonical examples of Bayes-optimal quantum decisions.

\subsection{Structure of a QGDS}\label{subsec:qgds-structure}

A QGDS consists of four stages:

\begin{enumerate}
    \item \textbf{Classical or quantum input preprocessing.}  
    The input $x$ may be classical (numerical vectors, images, linguistic values) or already quantum (states prepared by upstream quantum processes).  
    Classical data can be preprocessed via fuzzy sets, rough approximations, or MFL-based mediative granules to obtain structured, interpretable features.

    \item \textbf{Quantum encoding.}  
    The (preprocessed) input is embedded into a state $\rho(x)$ on a suitable Hilbert space $\mathcal{H}$.  
    The encoding may be amplitude-based, density-operator–based, or hybrid (e.g., encoding fuzzy memberships as populations in $\rho(x)$).

    \item \textbf{Granular evaluation via effects.}  
    The system uses a family of effects $\{E_j\}_{j=1}^m$, typically arising from a PVM or, more generally, from a POVM.  
    The output of this stage is the vector of granular probabilities
    \begin{equation}\label{eq:qgds-memberships}
        p_j(x) = \Tr\!\bigl(\rho(x)\,E_j\bigr), \qquad j=1,\dots,m.
    \end{equation}
    These values are interpreted as soft quantum memberships, analogous to fuzzy degrees of membership but grounded in operator theory.

    \item \textbf{Aggregation and decision.}  
    A decision rule $D(\cdot)$ maps the granular probability vector to an output class, score, or action:
    \begin{equation}\label{eq:qgds-decision}
        y = D\bigl(p_1(x),\dots,p_m(x)\bigr).
    \end{equation}
    Examples include margin-based rules, Bayesian optimal rules (e.g., Helstrom-type decision granules), or fuzzy-style aggregations.
\end{enumerate}

The first two stages implement a data-to-state encoding, the third stage realizes effect-based granular evaluation as developed in Sec.~\ref{sec:quantum-granules}, and the fourth stage applies a classical decision layer.  
This architecture matches the hybrid pipeline illustrated in Fig.~\ref{fig:qgds_pipeline}, and provides a precise operational interpretation grounded in quantum granulation and its connection to optimal quantum decision rules (Sec.~\ref{subsec:helstrom}).

\subsection{QGDS and decision optimality}\label{subsec:qgds-optimality}

A key advantage of the QGDS framework is its compatibility with quantum-optimal decision granules.  
For binary decisions, Theorem~\ref{thm:helstrom} guarantees that the optimal granule is the projector onto the positive part of
\begin{equation}\label{eq:qgds-delta}
    \Delta = \pi_0 \rho_0 - \pi_1 \rho_1,
\end{equation}
yielding a pair of soft decision memberships
\begin{equation}\label{eq:qgds-soft-memberships}
    \mu_0(\rho) := p_{\rho}(E^\star),
    \qquad 
    \mu_1(\rho) := 1 - p_{\rho}(E^\star),
\end{equation}
where $E^\star$ is the Helstrom granule.  
In particular, QGDS recovers fuzzy-like soft classification while retaining Bayes-optimality properties from quantum detection theory.

Beyond binary tasks, POVMs naturally generalize granulation to multi-class settings.  
Any POVM $\{E_j\}_{j=1}^m$ defines a QGDS decision mechanism, with memberships $p_j(\rho)=\Tr(\rho E_j)$, and variational or data-driven methods (such as VEL in Sec.~\ref{subsec:vel}) can be used to learn granules optimized for specific objectives, subject to the partition constraint $\sum_j E_j = I$ and the effect bounds $0 \preceq E_j \preceq I$.

In the QGC framework, Theorem~\ref{thm:helstrom} provides a canonical example in which the granular decision rule is both soft and interpretable (via the memberships $\mu_0,\mu_1$) and provably optimal among all effect-based binary decisions.  
For multi-class problems, learned POVM-based QGDS architectures extend this principle by treating each effect $E_j$ as a class-specific decision granule, whose memberships are optimized directly in the operator domain.

\subsection{QGDS pseudocode}\label{subsec:qgds-pseudocode}

The following pseudocode summarizes the hierarchical structure of a QGDS and is compatible with both simulated systems and noisy intermediate-scale quantum (NISQ) devices. For clarity, the input may be either a classical datum $x$ or an already prepared quantum state $\rho_{\mathrm{in}}$; after Step~2, both cases are represented by a quantum state $\rho(x)$, and the subsequent stages are identical.

\begin{center}
\begin{minipage}{0.95\linewidth}
\hrule\vspace{0.4em}
\textbf{Algorithm: Quantum Granular Decision System (QGDS)}\\[0.3em]

\textbf{Input:} either a classical datum $x$ or a prepared quantum state $\rho_{\mathrm{in}}$\\
\textbf{Output:} decision $y$\\[0.4em]

\textbf{1. Classical preprocessing (only if the input is classical):}\\
\quad \textbf{if} the input is a classical datum $x$ \textbf{then}\\
\quad\quad compute classical granules $G_1,\dots,G_k$ with memberships $\mu_i(x)$\\
\quad\quad (e.g., fuzzy, rough, or MFL-based).\\[0.3em]

\textbf{2. State preparation / quantum encoding:}\\
\quad \textbf{if} the input is classical \textbf{then}\\
\quad\quad map $x$ or $\{\mu_i(x)\}$ to a quantum state $\rho(x)\in D(H)$;\\
\quad \textbf{else} (the input is already quantum) \textbf{set} $\rho(x) := \rho_{\mathrm{in}}$.\\[0.3em]

\textbf{3. Quantum granular evaluation:}\\
\quad Choose a PVM/POVM $\{E_j\}_{j=1}^m$ implementing the desired quantum granules.\\
\quad Measure (or estimate via repeated runs) granular memberships\\
\quad $p_j(x) = \Tr\!\bigl(\rho(x)\,E_j\bigr)$ for $j=1,\dots,m$.\\[0.3em]

\textbf{4. Classical aggregation and decision:}\\
\quad Compute $y = D\bigl(p_1(x),\dots,p_m(x)\bigr)$ for a chosen rule $D$\\
\quad (e.g., Helstrom-type rule, margin-based rule, or fuzzy-style aggregation).\\[0.3em]
\hrule
\end{minipage}
\end{center}

The QGDS algorithm can be interpreted as follows.  
When the input is classical (for example, a feature vector, an image, or a linguistic description), Step~1 optionally constructs classical information granules $G_i$ (such as fuzzy clusters or MFL-based mediative granules) with memberships $\mu_i(x)$. These memberships provide an interpretable, granulation-based description of $x$ that may be fed into the quantum part of the pipeline.  

In Step~2, the system prepares the quantum state to be processed. If the input is classical, the raw datum $x$ or the granule-based features $\{\mu_i(x)\}$ are encoded into a quantum state $\rho(x)\in D(H)$ on a Hilbert space $H$, for instance via amplitude encoding or density-operator encoding. If the input is already a prepared quantum state $\rho_{\mathrm{in}}$ (as in a communication or sensing scenario), Step~1 is omitted and Step~2 simply sets $\rho(x):=\rho_{\mathrm{in}}$. Thus, from Step~3 onwards the classical-input and quantum-input modes share the same quantum granular evaluation and decision pipeline.

Step~3 implements the core quantum granular evaluation. A PVM or POVM $\{E_j\}_{j=1}^m$ is chosen to represent the quantum granules relevant to the task (for example, Helstrom-type decision granules in the binary case or a learned POVM in a multi-class setting). Running the quantum device and measuring (typically with repeated shots on NISQ hardware) yields empirical estimates of the granular memberships $p_j(x)=\Tr\!\bigl(\rho(x)E_j\bigr)$, which act as soft, fuzzy-like degrees of membership in each decision granule.  

Finally, Step~4 applies a classical aggregation rule $D$, ranging from a simple $\arg\max_j p_j(x)$ or a margin-based rule to a Bayes-optimal decision rule inspired by Theorem~\ref{thm:helstrom}. The output $y$ is obtained by combining quantum granular memberships through this classical decision layer, completing the QGDS pipeline.

\subsection{Interpretability and granule hierarchy}\label{subsec:qgds-interpretability}

A notable feature of QGDS is the preservation of interpretability across quantum and classical stages.  
Classical granules (fuzzy sets, shadowed sets, MFL-based mediative granules) may feed into quantum granules through the encoding step, while quantum granular outputs $p_j(x)$ may be passed to downstream fuzzy or decision-theoretic modules.  
This hierarchical granulation mirrors multi-level approximation structures in classical GrC, but is now enriched by non-commutative operations, contextual quantum information, and effect-based representations of uncertainty.

Through this synthesis, QGDS provides a coherent computational model connecting operator-valued granules, quantum measurements, and classical inference.  
It also underpins the case studies in Sec.~\ref{sec:case-studies} and offers a template for architectures that integrate Quantum Granular Computing with broader quantum machine learning and decision-support systems.

\section{Models and Architectures}\label{sec:Models}
The definition of quantum granules provides the foundation for building computational models that exploit granularity within the quantum domain.  
In this section, we distinguish two complementary modeling views: hybrid schemes that combine classical granulation with quantum encoding, and fully quantum schemes in which granularity is defined directly in Hilbert space.  
We then revisit the Quantum Granular Decision Systems (QGDS) perspective of Sec.~\ref{sec:qgds}, emphasizing its role as a unifying decision architecture across these models.

Within this framework, we focus on three reference architectures for QGC.  
The first is the \emph{Measurement-Driven Granular Partitioning (MDGP)} pattern, where classical granules are formed at the data level and subsequently lifted to quantum effects.  
The second is \emph{Variational Effect Learning (VEL)}, in which quantum granules are parameterized effects learned from data.  
The third is a \emph{Hybrid Classical--Quantum (HCQ)} pipeline that combines classical granulation with a QGDS decision layer.  
These architectures specify how quantum granules are defined or learned and how they interface with the QGDS layer introduced in Sec.~\ref{sec:qgds}.

\subsection{Hybrid models: classical granulation with quantum encoding}\label{subsec:hybrid}
Hybrid models use established techniques of classical GrC to preprocess data into granules, which are then encoded into quantum states for further manipulation.  
They are \emph{hybrid} in the sense that granulation and part of the reasoning remain explicitly classical (fuzzy, rough, shadowed, MFL-based, or other GrC formalisms), while the subsequent evaluation and decision are carried out in the quantum, operator-valued domain.  

A typical hybrid QGC pipeline comprises three stages:
\begin{enumerate}
    \item[(i)] \textbf{Granulation step.}
    Given a dataset $X$, a classical granulation procedure partitions or clusters $X$ into granules $\{G_1,\dots,G_m\}$, each associated with membership degrees $\mu_i(x)$ for $x\in X$.  
    These memberships capture interpretable, graded inclusion of data points in each granule and can be obtained, for example, from fuzzy clustering, rough approximations, shadowed sets, MFL-based mediative schemes, or related GrC constructions.

    \item[(ii)] \textbf{Quantum encoding.}
    Each granule (or each granule-based feature vector) is then encoded into a quantum state.  
    A common approach is amplitude encoding:
    \begin{align}
    \ket{\psi_G} &= \frac{1}{\sqrt{Z}} \sum_{x\in G} \mu(x)\,\ket{x}, \label{eq:amplitude-encoding}\\
    Z &= \sum_{x\in G} |\mu(x)|^2, \label{eq:amplitude-norm}
    \end{align}
    where $\mu(x)$ is a (possibly normalized) membership function, $Z$ is a normalizing constant, and $\ket{x}$ denotes a computational-basis state labeled by $x$.  
    Alternatively, angle or basis encoding can map membership degrees to rotation parameters in a quantum circuit.  
    The choice of encoding affects circuit depth, resource requirements, and noise sensitivity, and is typically tailored to the target application.

    \item[(iii)] \textbf{Effect-based evaluation and decision (QGDS layer).}
    In the third stage, the encoded state $\rho(x)$ is processed by a Quantum Granular Decision System, as described in Sec.~\ref{sec:qgds}.  
    A family of effects $\{E_j\}$ implements the relevant quantum granules, yielding granular memberships $p_j(x)=\Tr\bigl(\rho(x)E_j\bigr)$ that are then aggregated by a classical decision rule $D$.  
    This QGDS layer realizes the final decision in terms of operator-valued granules, links the hybrid model to Helstrom-type optimal decision granules, and preserves interpretability by exposing both the classical and quantum granular structures.
\end{enumerate}

This architecture allows the direct use of quantum algorithms (for instance, quantum principal component analysis, amplitude amplification, or variational classifiers) on structured representations produced by classical GrC, combining the interpretability of classical granules with the potential expressiveness and parallelism of quantum state manipulations.  
In the broader QGDS setting, the hybrid architecture uses classical granulation to shape the input space and quantum effect-based granules to implement the final decision layer, so that the overall model remains explicitly interpretable while being genuinely quantum-enhanced.

\subsection{Purely quantum models}\label{subsec:purely-quantum}

Purely quantum models define granules directly within Hilbert space, without relying on classical preprocessing or external membership functions.  
In this regime, granules correspond to projectors or, more generally, to effects $\{P_i\}$ or $\{E_i\}$ satisfying
$\sum_i P_i = I$ (PVM) or $\sum_i E_i = I$ (POVM), with $0 \preceq E_i \preceq I$, and data are interpreted solely as quantum states $\rho$ processed through granular measurements.  
This paradigm aligns with quantum machine-learning approaches in which the model is fully encoded in operator-valued transformations and measurements~\cite{biamonte2017,schuld2018}.  
From the QGC viewpoint, the basic building block is therefore a measurement-driven granular model in which decisions are derived directly from the statistics of a PVM or POVM.

\subsubsection{Granular measurement model.}
A quantum system is probed using a PVM or POVM, where each outcome corresponds to a quantum granule specified by $(P_i,p_i)$ or $(E_i,p_i)$:
\begin{equation}\label{eq:purely-PVM}
    p_i = \Tr(\rho P_i),
\end{equation}
or
\begin{equation}\label{eq:purely-POVM}
    p_i = \Tr(\rho E_i).
\end{equation}
Granular hierarchies arise naturally by grouping outcomes into coarser effects
\begin{equation}\label{eq:purely-aggregate}
    E_S = \sum_{i\in S} E_i,
\end{equation}
for any subset $S$ of outcomes.  
Since $\sum_i E_i = I$ and $0 \preceq E_i \preceq I$, each $E_S$ is again an effect, which mirrors multilevel approximations in classical GrC but now expressed through operator sums~\cite{heinosaari2012,busch2016}.  
These hierarchies provide the quantum analogue of classical refinement and coarsening of granules.

\subsubsection{Advantages and expressive power.}
This approach eliminates the need for external membership functions: degrees of belonging are intrinsically defined through Born probabilities.  
Moreover, entangled states naturally support \emph{joint} granules acting across subsystems, which enables the representation of correlations and contextual structures with no classical analogue.  
Such non-classical granularity underpins the expressive power of purely quantum models and is central to variational quantum architectures~\cite{cerezo2021vqa}.

\subsection{QGDS as a unifying decision layer}\label{subsec:QGDS}
The architectures above can be viewed through the lens of Quantum Granular Decision Systems (QGDS), as introduced in Sec.~\ref{sec:qgds}.  
A QGDS provides a unifying decision layer that can be instantiated on top of both hybrid models (Sec.~\ref{subsec:hybrid}) and purely quantum models (Sec.~\ref{subsec:purely-quantum}).

Recall from Sec.~\ref{sec:qgds} that a QGDS comprises: (i) a family of effect-based quantum granules $\{E_j\}$ arising from a PVM or POVM, (ii) an encoding map $x\mapsto \rho(x)$ from inputs to quantum states, (iii) granular memberships $p_j(x)=\Tr\bigl(\rho(x)\,E_j\bigr)$ obtained by measurement, and (iv) a classical decision rule $D$ acting on the membership vector.  
In a hybrid architecture, the encoding stage is preceded by classical granulation, whereas in a purely quantum architecture the inputs are already given as quantum states.  
In both cases, the QGDS layer is responsible for translating effect-based granular information into a final decision.

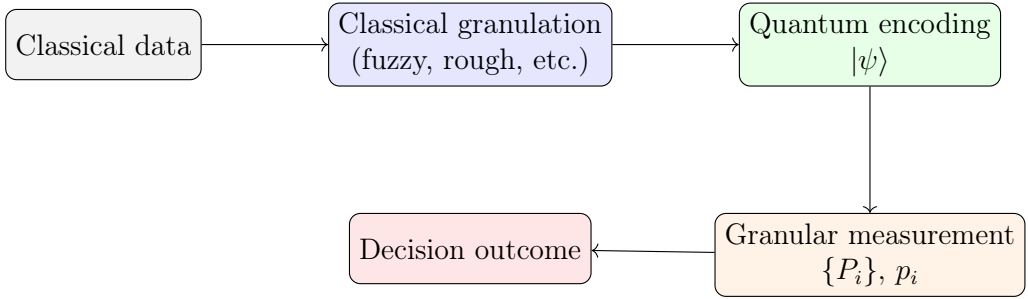
\begin{figure}[t]
  \centering
  \resizebox{\linewidth}{!}{\begin{tikzpicture}[node distance=1.8cm,scale=0.9, every node/.style={transform shape}]

\node[draw, rounded corners, fill=gray!10, minimum width=2.5cm, minimum height=1cm, align=center] (data)   {Classical data};

\node[draw, rounded corners, fill=blue!10, right=of data,        minimum width=3.0cm, minimum height=1cm, align=center] (gran)     {Classical granulation\\(fuzzy, rough, etc.)};
  
\node[draw, rounded corners, fill=green!10,  right=of gran,        minimum width=3.0cm, minimum height=1cm, align=center] (encode)   {Quantum encoding\\$\ket{\psi}$};

\node[draw, rounded corners, fill=orange!10, below=of encode,        minimum width=3.0cm, minimum height=1cm, align=center] (meas)     {Granular measurement\\$\{P_i\},\,p_i$};

\node[draw, rounded corners, fill=red!10, below=of gran,
      minimum width=2.6cm, minimum height=1cm, align=center] (decision) {Decision outcome};

  \draw[->] (data) -- (gran);
  \draw[->] (gran) -- (encode);
   \draw[->] (encode.south) -- (meas.north);   
  \draw[->] (meas) -- (decision);
          
\end{tikzpicture}}
  \caption{Hybrid pipeline for a Quantum Granular Decision System (QGDS).  
  Classical granulation structures the input, which is then encoded as a quantum state, processed through granular measurements (PVM/POVM), and mapped to a final decision by a Bayes-type rule acting on the resulting membership vector.}
  \label{fig:qgds_pipeline}
\end{figure}

\subsection{Variational Effect Learning (VEL)}\label{subsec:vel}

Measurement-Driven Granular Partitioning (MDGP) fixes the granular effects a priori, whereas Variational Effect Learning (VEL) treats them as trainable quantum parameters. In a VEL architecture, the QGDS granules are realized by a parametrized POVM $\{E_j(\theta)\}_{j=1}^m$ acting on the encoded state $\rho(x)$, and the parameters $\theta$ are optimized from data under the physical constraints
\begin{equation}
  0 \preceq E_j(\theta) \preceq I,
  \qquad
  \sum_{j=1}^m E_j(\theta) = I.
\end{equation}

A convenient parametrization is to start from a fixed POVM $\{F_j\}_{j=1}^m$ (for example, orthogonal projectors or block-diagonal templates) and conjugate it by a variational unitary $U(\theta)$:
\begin{equation}
  E_j(\theta)
  \;=\;
  U(\theta)^\dagger F_j U(\theta),
  \qquad j = 1,\dots,m.
  \label{eq:vel-effects}
\end{equation}
Because $\{F_j\}$ is a POVM and $U(\theta)$ is unitary, the family $\{E_j(\theta)\}$ is automatically a POVM for all $\theta$, i.e., $0 \preceq E_j(\theta) \preceq I$ and $\sum_j E_j(\theta) = I$ hold by construction. Structural priors such as sparsity, symmetry, or block structure can be encoded at the level of $\{F_j\}$ and are preserved by the conjugation.

Given an input $x$ encoded as a state $\rho(x)$, the granular probabilities in VEL are
\begin{equation}
  p_j(x;\theta)
  \;=\;
  \Tr\!\bigl(\rho(x)\,E_j(\theta)\bigr),
  \qquad j = 1,\dots,m,
  \label{eq:vel-probs}
\end{equation}
which enter directly into the generic QGDS decision rule
\begin{equation}
  y = D\bigl(p_1(x;\theta),\dots,p_m(x;\theta)\bigr)
\end{equation}
introduced in Sec.~\ref{sec:qgds}. For supervised tasks with labeled examples $\{(x^{(n)},y^{(n)})\}_{n=1}^N$, VEL optimizes $\theta$ to minimize a task-specific empirical risk
\begin{equation}
  \mathcal{L}(\theta)
  \;=\;
  \frac{1}{N}
  \sum_{n=1}^N
  \ell\Bigl(y^{(n)},\,p(x^{(n)};\theta)\Bigr),
  \label{eq:vel-loss}
\end{equation}
where $p(x^{(n)};\theta) = (p_1(x^{(n)};\theta),\dots,p_m(x^{(n)};\theta))$ and $\ell$ is a suitable loss function (for example, cross-entropy for classification or a margin-based loss). The optimization is carried out on a classical processor using gradient-based or gradient-free methods, while all constraints on $\{E_j(\theta)\}$ are enforced through the parametrization in~\eqref{eq:vel-effects}.

VEL thus provides a data-driven mechanism to learn quantum granules that remain physically valid effects and adapt to the structure of the task. In the commutative (classical) limit, where all $\rho(x)$ and $E_j(\theta)$ share a common eigenbasis, VEL reduces to learning classical soft granules over a Boolean island (Theorem~\ref{thm:boolean-islands}). In genuinely non-commutative settings, VEL can exploit Hilbert-space geometry and entanglement to realize granular decision boundaries with no direct classical analogue, while remaining compatible with NISQ hardware via shallow, structured ans\"atze for $U(\theta)$~\cite{cerezo2021vqa}.

\section{Case Studies}\label{sec:case-studies}

This section presents concise case studies illustrating how quantum granules behave in simple quantum systems.  
These examples connect the abstract definitions of effects and POVMs with concrete qubit-level behavior and with hybrid classical--quantum pipelines.  
We focus on one- and two-qubit systems because they provide clear geometric and algebraic intuition.

\subsection{Granulation of pure qubit states}\label{subsec:qubit-pure}

A pure qubit state can be written as
\begin{equation}\label{eq:pure-qubit}
\ket{\psi(\theta,\phi)}
= \cos\!\left(\tfrac{\theta}{2}\right)\ket{0}
+ e^{i\phi}\sin\!\left(\tfrac{\theta}{2}\right)\ket{1},
\end{equation}
with the density operator $\rho = \ket{\psi}\!\bra{\psi}$.  
Any effect $E$ defines a soft granule with membership probability
\begin{equation}\label{eq:pure-qubit-prob}
p_\rho(E) = \Tr(\rho E).
\end{equation}

If $E$ is projective, the granular region corresponds to a crisp spherical cap on the Bloch sphere.  
If $E$ is non-projective, the granular region is fuzzy, with $0 < p_\rho(E) < 1$ governed by the eigenvalues of $E$.

When the effects form a POVM $\{E_j\}$, the family of quantum granules $\{E_j\}$ induces a soft partition of the Bloch sphere in the QGC sense, with each $E_j$ acting as a quantum granule over the space of pure qubit states.

\subsection{Granulation of mixed qubit states}\label{subsec:qubit-mixed}

Mixed qubit states interpolate between pure states and the maximally mixed state $I/2$.  
A standard parametrization is
\begin{equation}\label{eq:mixed-qubit}
\rho
= \tfrac{1}{2}\bigl(I + r_x X + r_y Y + r_z Z\bigr),
\qquad \|\vec{r}\|\le 1,
\end{equation}
where $\vec{r} = (r_x,r_y,r_z)$ is the Bloch vector in the usual Bloch-sphere representation~\cite{nielsen2010}.

Any qubit effect can be written in Bloch form as
\begin{equation}\label{eq:effect-bloch}
E = \alpha I + \vec{e}\!\cdot\!\vec{\sigma},
\end{equation}
where $\alpha\in\mathbb{R}$, $\vec{e} = (e_x,e_y,e_z)\in\mathbb{R}^3$, $\vec{\sigma} = (X,Y,Z)$, and
\begin{equation}
\|\vec{e}\| = \sqrt{e_x^2 + e_y^2 + e_z^2}
\end{equation}
is the Euclidean norm of $\vec{e}$.  
For $E$ to be a valid effect (i.e., $0 \preceq E \preceq I$), the Bloch parameters must satisfy
\begin{equation}\label{eq:effect-constraints}
  0 \;\le\; \alpha \pm \|\vec{e}\| \;\le\; 1.
\end{equation}
The granular response is then
\begin{equation}\label{eq:mixed-response}
p_\rho(E)
= \alpha + \vec{r}\cdot\vec{e}.
\end{equation}

As $\|\vec{r}\|$ decreases, the contrast between high and low values of
$p_\rho(E)$ diminishes, reflecting the loss of discriminative power due
to mixing.

\subsection{Two-qubit granulation: parity effects}\label{subsec:two-qubit}

Two-qubit systems provide simple illustrations of commuting and non-commuting granules.  
Define the \emph{parity effects}
\begin{equation}\label{eq:parity-effects}
E_{\mathrm{even}} = \tfrac{1}{2}\bigl(I\!\otimes\!I + Z\!\otimes\!Z\bigr), 
\qquad
E_{\mathrm{odd}}  = \tfrac{1}{2}\bigl(I\!\otimes\!I - Z\!\otimes\!Z\bigr).
\end{equation}
These operators satisfy $E_{\mathrm{even}}+E_{\mathrm{odd}} = I$ and commute.  
Moreover, since $(Z\otimes Z)^2 = I\otimes I$, both $E_{\mathrm{even}}$ and $E_{\mathrm{odd}}$ are projectors, so they form a PVM and represent sharp, complementary granules corresponding to even and odd parity.

For any two-qubit state $\rho$,
\begin{equation}\label{eq:parity-probs}
p_{\mathrm{even}} = \Tr(\rho E_{\mathrm{even}}),
\qquad
p_{\mathrm{odd}} = 1 - p_{\mathrm{even}}.
\end{equation}
Thus, parity granulation reduces to classical binary classification on a Boolean island (Theorem~\ref{thm:boolean-islands}).  
This representation appears, for instance, in error detection and syndrome extraction, where only the global parity granule is relevant.

\subsection{Fuzzy-like decision via optimal granules}\label{subsec:fuzzy-like-decision}

Soft decision-making in QGC is naturally expressed through the optimal Helstrom granule (Theorem~\ref{thm:helstrom}).  
For states $\rho_0,\rho_1$ with priors $\pi_0,\pi_1$, recall the operator
\begin{equation}
\Delta = \pi_0 \rho_0 - \pi_1 \rho_1,
\end{equation}
as in Eq.~\eqref{eq:delta-helstrom}.  
The optimal decision granule $E^\star$ is the projector onto the positive part of $\Delta$.

For any decision granule $E$,
\begin{equation}\label{eq:helstrom-success-case}
P_{\mathrm{succ}}(E)
= \pi_0\,p_{\rho_0}(E) + \pi_1\,p_{\rho_1}(I-E),
\end{equation}
with optimal value
\begin{equation}\label{eq:helstrom-opt-case}
P_{\mathrm{succ}}(E^\star)
= \tfrac{1}{2}\bigl(1 + \|\Delta\|_1\bigr),
\end{equation}
where $\|\Delta\|_1$ denotes the trace norm of $\Delta$.

The induced soft decision rule is
\begin{equation}\label{eq:fuzzy-like-decision}
\mu_0(\rho) := p_{\rho}(E^\star),
\qquad
\mu_1(\rho) := 1 - p_{\rho}(E^\star).
\end{equation}
This provides a quantum, operator-theoretic analogue of fuzzy two-class decision schemes, with optimality grounded in quantum detection theory~\cite{helstrom1969,holevo2011}.

\subsection{Hybrid classical--quantum example}\label{subsec:hybrid-example}

Hybrid classical--quantum granular pipelines (HCQ) combine:
\begin{itemize}
    \item classical granules (e.g., fuzzy or rough),
    \item quantum encoding of classical memberships into $\rho(x)$,
    \item granular quantum queries (effects $E_j$),
    \item classical aggregation or decision-making.
\end{itemize}

In the QGDS framework, this pipeline follows the stages depicted schematically in Fig.~\ref{fig:qgds_pipeline}: classical data are granulated, encoded as quantum states, processed by quantum effects, and finally aggregated into a decision by a classical rule.

This setting illustrates how QGC integrates classical interpretability with quantum operator structure, enabling granular reasoning under incompatibility and contextuality while keeping the decision stage compatible with existing granular and fuzzy methodologies.

\section{Discussion}\label{sec:discussion}

The development of Quantum Granular Computing (QGC) provides an operator-theoretic extension of classical granulation that accommodates incompatibility, contextuality, and measurement-induced disturbance.  
This section discusses the interpretability of quantum granules, their relationship to classical granular models, and the implications of working in a non-distributive algebraic setting.  
It also outlines practical considerations for near-term implementations.

\subsection{Interpretability of quantum granules}\label{subsec:interpretability}

One of the goals of QGC is to preserve the interpretability that characterizes classical granular models.  
Quantum granules, defined as effects, retain a direct probabilistic meaning through $p_\rho(E)=\Tr(\rho E)$, which can be viewed as a degree of membership or compatibility between the state and the concept represented by $E$.  
In projective cases, granules correspond to crisp regions of Hilbert space, while non-projective effects yield soft or ``fuzzy-like'' boundaries.  

Despite being operator-valued, quantum granules support human-readable semantics when:
\begin{itemize}
    \item their eigenstructure corresponds to meaningful directions (for example, Bloch-sphere axes or physically interpretable observables),
    \item they arise from a physically motivated PVM/POVM (for example, energy levels, parity, or syndrome measurements),
    \item or they are learned through VEL with constraints promoting sparsity, symmetry, or block structure in $E$.
\end{itemize}
In this way, QGC allows interpretability comparable to fuzzy granulation, but grounded in the operational structure of quantum mechanics and measurement theory~\cite{heinosaari2012,busch2016,peres1995}.

\subsection{Relation to classical granulation models}\label{subsec:relation-classical}

QGC generalizes classical granules in multiple ways:
\begin{enumerate}
    \item \textbf{Fuzzy sets.} Soft effects $E$ provide operator-valued analogues of membership functions, where eigenvalues play the role of degrees of compatibility.  
    The Born probabilities $p_\rho(E)$ act as fuzzy-style memberships tied to the underlying state, connecting directly with classical fuzzy granulation~\cite{zadeh1997,pedrycz1998}.
    \item \textbf{Rough sets.} Projective decompositions $\{P_i\}$ form crisp approximations, and effect-valued granules $E_i = P_i E P_i$ capture boundary-like behavior between regions, reminiscent of lower/upper approximations and boundary regions in rough-set theory~\cite{pawlak1982,yao2004}.
    \item \textbf{Interval and shadowed granules.} The spectrum of $E$ induces an internal interval structure, while measurement disturbance naturally produces ``shadowed'' regions where membership cannot be sharpened without changing the state, echoing interval-valued and shadowed-set models~\cite{pedrycz1998,SKOWRON2025122078,Ferreyra2023Topological}.
    \item \textbf{MFL and mediative reasoning.} Mediative interactions in Mediative Fuzzy Logic correspond to compatibility and conflict between granules.  
    QGC extends this idea to operator-mediated interactions, via non-commuting effects and channels, that have no classical counterpart~\cite{MontielCastillo2007,CastilloMelin2023}.
\end{enumerate}

Quantum granules reduce to classical granules whenever the family of effects commutes.  
This phenomenon, formalized by the Boolean-islands result in Theorem~\ref{thm:boolean-islands}, provides a clear bridge between quantum and classical granular computation: on commuting islands, QGC collapses to a classical probability space; outside them, genuinely quantum features such as contextuality and entanglement emerge~\cite{birkhoff1936,peres1995}.

\subsection{Non-distributive effects and contextuality}\label{subsec:non-distributive}

A fundamental difference between QGC and classical GrC is the non-distributive nature of the underlying operator lattice.  
When effects do not commute, the corresponding granules cannot be jointly evaluated, nor can they be embedded into a common Boolean algebra.  
This non-distributivity reflects genuine contextuality: the outcome of querying one granule may depend on the preceding measurement, and there may be no single classical probability space that reproduces all granular memberships simultaneously~\cite{peres1995,heinosaari2012}.

Rather than being a limitation, this feature allows QGC to model phenomena such as:
\begin{itemize}
    \item mutually exclusive observations that cannot be realized in the same experimental context,
    \item incompatible measurement settings, where attempting to refine one granule degrades information about another,
    \item state disturbance induced by granule evaluation, which changes subsequent membership degrees in a controlled way (for example, through L\"uders refinement).
\end{itemize}
These phenomena have no classical analogue and represent a key advantage of adopting quantum structures for granular reasoning in domains where contextuality is intrinsic.

\subsection{Advantages for intelligent systems}\label{subsec:advantages}

QGC offers several potential benefits for intelligent systems:
\begin{itemize}
    \item \textbf{Soft decision boundaries.} Non-projective effects act as fuzzy-like granules, enabling smooth decision boundaries and graded memberships within a fully quantum formalism, and connecting naturally with quantum detection and estimation theory~\cite{helstrom1969,holevo2011}.
    \item \textbf{Context-aware granulation.} Measurement disturbance and channel evolution give rise to context-dependent granules whose memberships adapt to prior observations and noise models, reflecting the inherently contextual nature of quantum measurements~\cite{busch2016,heinosaari2012}.
    \item \textbf{Compatibility with quantum hardware.} Effects, POVMs, and channels are natively supported in quantum devices, allowing QGC-based models to be implemented within standard quantum-information toolkits and to interface with quantum machine-learning pipelines~\cite{biamonte2017,schuld2018}.
    \item \textbf{Hybrid integration.} QGC naturally supports classical--quantum pipelines in which fuzzy, rough, or MFL-based preprocessing feeds into quantum granules (MDGP, VEL, QGDS), preserving interpretability while exploiting Hilbert-space structure and entanglement for richer representations.
\end{itemize}
These features enable reasoning under uncertainty in regimes where classical granules cannot represent contextual or incompatible information, and they suggest a path toward explainable quantum models that remain compatible with existing GrC methodologies.

\subsection{Limitations and practical considerations}\label{subsec:limitations}

Several limitations and challenges remain for QGC, especially in near-term (NISQ) implementations:
\begin{itemize}
    \item \textbf{Noise sensitivity.} Non-projective effects can amplify sampling noise unless regularized.  
    Empirical memberships $\hat p_\rho(E)$ require sufficient shots and appropriate confidence intervals; noise and readout-mitigation routines must be integrated into the interpretation of granules~\cite{heinosaari2012,busch2016}.
    \item \textbf{Hardware restrictions.} Current devices constrain the depth and connectivity of ans\"atze used in VEL and QGDS.  
    This limits the complexity of learnable granules and may favor shallow, structured architectures over generic parametrizations, in line with observations on variational quantum algorithms~\cite{cerezo2021vqa}.
    \item \textbf{Scalability.} The number of parameters in effect-based learning can grow quadratically with the Hilbert-space dimension.  
    Exploiting symmetries, factorized structures, and low-rank or tensor-network parametrizations will be crucial for scaling to larger systems.
    \item \textbf{Interpretability of learned granules.} While structured ans\"atze improve readability, arbitrary learned effects may lack intuitive meaning.  
    Additional constraints (for example, sparsity, block structure, or commutation with physically meaningful observables) may be needed to maintain semantic transparency, particularly in decision-support applications where explainability is critical.
\end{itemize}

Despite these challenges, QGC provides a mathematically robust foundation for studying granulation in settings where quantum structure is intrinsic or advantageous.  
It complements classical GrC by offering a principled way to represent and manipulate granules in non-commutative spaces, and it opens a path toward transparent quantum models that reason with uncertainty, hierarchy, and context at the operator level.

\section{Conclusions}\label{sec:conclusions}

This paper has outlined an operator-theoretic framework for Quantum Granular Computing (QGC), extending classical granular computing, including fuzzy, rough, and shadowed granules, to the quantum regime. A central modeling choice is to represent quantum granules as effects on a finite-dimensional Hilbert space, so that granular memberships are given by Born probabilities. This operator-theoretic viewpoint offers a common language for sharp (projective) and soft (non-projective) granules and embeds granulation directly into the standard formalism of quantum information theory.

On this basis, we collected several basic properties of effect-based quantum granules. We characterized normalization and monotonicity properties, formulated a ``Boolean islands'' result for commuting families in which classical probabilistic reasoning is recovered, and analyzed granular refinement under Lüders updates. We also described the evolution of granules under quantum channels via the Heisenberg adjoint, thereby clarifying how noisy dynamics transform ideal granules into effective ones in non-commutative settings.

We also highlighted a connection between QGC and quantum detection and estimation theory by interpreting optimal binary decisions as Helstrom-type decision granules. In this formulation, the Helstrom projector provides a natural example of a quantum granule that realizes Bayes-optimal soft decisions, suggesting an analogy with fuzzy-style classifiers within an operator-theoretic framework. This link illustrates how granular reasoning and optimal quantum decision rules can be treated within a single formalism.

Building on these theoretical elements, we introduced Quantum Granular Decision Systems (QGDS) as an operational model for decision-making in QGC. We outlined three reference architectures: Measurement-Driven Granular Partitioning (MDGP), Variational Effect Learning (VEL), and Hybrid Classical--Quantum pipelines (HCQ). Together, these architectures illustrate possible ways in which quantum granules can be specified, learned, and integrated with classical components in a structured way, while remaining compatible with near-term quantum hardware and standard quantum-information primitives.

Compact case studies on qubit granulation, two-qubit parity effects, and Helstrom-style soft decisions served to illustrate how the proposed framework behaves in simple scenarios. These examples suggest that QGC can reproduce fuzzy-like behavior (soft granules, graded memberships, and smooth decision boundaries) while also exploiting non-commutativity, contextuality, and entanglement. In this sense, QGC does not simply mimic classical granulation in Hilbert space, but appears to enlarge the design space of granular models by allowing genuinely quantum phenomena.

Overall, we have proposed a mathematically grounded framework for QGC that connects classical granular models, operator theory, and modern quantum information science. In particular, the operator-based formulation provides a unified treatment of sharp and soft quantum granules while recovering fuzzy, rough, and shadowed granules as commutative and projective special cases of quantum granules, as formalized by the Boolean-islands result in Theorem~\ref{thm:boolean-islands}. At the level considered here, this supports the view that core models of classical GrC can be embedded as commutative, projective instances of QGC, and it provides a basis for further work on quantum information processing, granular reasoning, and intelligent systems in domains where contextuality and non-commutativity are intrinsic features of the data and the decision process.

\subsection*{Acknowledgments}
We gratefully acknowledge the support of the Secretaría de Ciencia, Humanidades, Tecnología e Innovación (SECIHTI), the Instituto Polit\'{e}cnico Nacional (IPN), and the Comisión de Fomento y Apoyo Acad\'{e}mico del IPN (COFAA), which made this research possible.


\begin{thebibliography}{10}
\providecommand{\url}[1]{{#1}}
\providecommand{\urlprefix}{URL }
\expandafter\ifx\csname urlstyle\endcsname\relax
  \providecommand{\doi}[1]{DOI~\discretionary{}{}{}#1}\else
  \providecommand{\doi}{DOI~\discretionary{}{}{}\begingroup \urlstyle{rm}\Url}\fi

\bibitem{Acampora2023QuantumGeneticGranular}
Acampora, G., Vitiello, A.: Improving quantum genetic optimization through granular computing.
\newblock Granular Computing \textbf{8}(4), 709--729 (2023).
\newblock \doi{10.1007/s41066-022-00348-9}

\bibitem{bargiela2003granular}
Bargiela, A., Pedrycz, W.: Granular Computing: An Introduction.
\newblock Kluwer Academic Publishers, Boston (2003).
\newblock \doi{10.1007/978-1-4615-1033-8}

\bibitem{biamonte2017}
Biamonte, J., Wittek, P., Pancotti, N., Rebentrost, P., Wiebe, N., Lloyd, S.: Quantum machine learning.
\newblock Nature \textbf{549}(7671), 195--202 (2017).
\newblock \doi{10.1038/nature23474}

\bibitem{birkhoff1936}
Birkhoff, G., von Neumann, J.: The logic of quantum mechanics.
\newblock Annals of Mathematics \textbf{37}(4), 823--843 (1936).
\newblock \doi{10.2307/1968621}

\bibitem{busch2016}
Busch, P., Lahti, P., Pellonp{\"a}{\"a}, J.P., Ylinen, K.: Quantum Measurement.
\newblock Theoretical and Mathematical Physics. Springer, Cham (2016).
\newblock \doi{10.1007/978-3-319-43389-9}

\bibitem{CastilloMelin2023}
Castillo, O., Melin, P.: Proposal for mediative fuzzy control: From type-1 to type-3.
\newblock Symmetry \textbf{15}(10), 1941 (2023).
\newblock \doi{10.3390/sym15101941}

\bibitem{cerezo2021vqa}
Cerezo, M., Arrasmith, A., Babbush, R., Benjamin, S.C., Endo, S., Fujii, K., McClean, J.R., Mitarai, K., Yuan, X., Cincio, L., Coles, P.J.: Variational quantum algorithms.
\newblock Nature Reviews Physics \textbf{3}(9), 625--644 (2021).
\newblock \doi{10.1038/s42254-021-00348-9}

\bibitem{Choi1975}
Choi, M.D.: Completely positive linear maps on complex matrices.
\newblock Linear Algebra and its Applications \textbf{10}(3), 285--290 (1975)

\bibitem{Ferreyra2023Topological}
Ferreyra, M.M., Sesmilo, M.P., Arce, G.: A topological approach to rough sets from a granular computing perspective.
\newblock Information Sciences \textbf{644}, 119168 (2023).
\newblock \doi{10.1016/j.ins.2023.119168}

\bibitem{foulis1994}
Foulis, D.J., Bennett, M.K.: Effect algebras and unsharp quantum logics.
\newblock Foundations of Physics \textbf{24}(10), 1331--1352 (1994).
\newblock \doi{10.1007/BF02283036}

\bibitem{heinosaari2012}
Heinosaari, T., Ziman, M.: The Mathematical Language of Quantum Theory: From Uncertainty to Entanglement.
\newblock Cambridge University Press, Cambridge (2012).
\newblock \doi{10.1017/CBO9781139031103}

\bibitem{helstrom1969}
Helstrom, C.W.: Quantum detection and estimation theory.
\newblock Journal of Statistical Physics \textbf{1}(2), 231--252 (1969).
\newblock \doi{10.1007/BF01007479}

\bibitem{holevo2011}
Holevo, A.S.: Probabilistic and Statistical Aspects of Quantum Theory, 2nd edn.
\newblock Edizioni della Normale / Springer, Pisa (2011).
\newblock Revised and extended edition

\bibitem{iancu2018heart}
Iancu, I.: Heart disease diagnosis based on mediative fuzzy logic.
\newblock Artificial Intelligence in Medicine \textbf{89}, 51--60 (2018).
\newblock \doi{10.1016/j.artmed.2018.05.004}

\bibitem{karnik2001operations}
Karnik, N.N., Mendel, J.M.: Operations on type-2 fuzzy sets.
\newblock Fuzzy Sets and Systems \textbf{122}(2), 327--348 (2001).
\newblock \doi{10.1016/S0165-0114(00)00079-8}

\bibitem{Kraus1983}
Kraus, K.: States, Effects, and Operations: Fundamental Notions of Quantum Theory, \emph{Lecture Notes in Physics}, vol. 190.
\newblock Springer (1983)

\bibitem{Lin1999FormalWords}
Lin, T.Y.: Granular computing: Fuzzy logic and rough sets.
\newblock In: L.A. Zadeh, J.~Kacprzyk (eds.) Computing with Words in Information/Intelligent Systems 1, \emph{Studies in Fuzziness and Soft Computing}, vol.~33, pp. 183--200. Springer, Berlin, Heidelberg (1999).
\newblock \doi{10.1007/978-3-7908-1873-4_9}

\bibitem{Lin2003FormalWords}
Lin, T.Y.: Granular computing: Structures, representations, and applications.
\newblock In: G.~Wang, Q.~Liu, Y.~Yao, A.~Skowron (eds.) Rough Sets, Fuzzy Sets, Data Mining, and Granular Computing, \emph{Lecture Notes in Computer Science}, vol. 2639, pp. 16--24. Springer, Berlin, Heidelberg (2003).
\newblock \doi{10.1007/3-540-39205-X_3}

\bibitem{Luo2025TrilevelRoughGranular}
Luo, J., Shi, C., Yao, Y.: A trilevel framework of rough sets and granular rough sets: Characterizing existing models and formulating new models.
\newblock Information Sciences \textbf{718}, 122376 (2025).
\newblock \doi{10.1016/j.ins.2025.122376}

\bibitem{MelinCastillo2025}
Melin, P., Castillo, O.: Towards type-3 mediative fuzzy systems and their applications.
\newblock International Journal of Fuzzy Systems \textbf{27}, 2351--2365 (2025).
\newblock \doi{10.1007/s40815-024-01902-0}

\bibitem{mendel2002type2}
Mendel, J.M., John, R.I.B.: Type-2 fuzzy sets made simple.
\newblock IEEE Transactions on Fuzzy Systems \textbf{10}(2), 117--135 (2002).
\newblock \doi{10.1109/91.995115}

\bibitem{MontielCastillo2007}
Montiel, O., Castillo, O.: Mediative fuzzy logic: A novel approach for handling contradictory knowledge.
\newblock In: Hybrid Intelligent Systems, \emph{Studies in Fuzziness and Soft Computing}, vol. 208, pp. 75--91. Springer (2007).
\newblock \doi{10.1007/978-3-540-37421-3\_5}

\bibitem{Montiel2008MFL}
Montiel, O., Castillo, O., Melin, P., Sep{\'u}lveda, R.: Mediative fuzzy logic: a new approach for contradictory knowledge management.
\newblock Soft Computing \textbf{12}, 251--256 (2008).
\newblock \doi{10.1007/s00500-007-0206-7}

\bibitem{nielsen2010}
Nielsen, M.A., Chuang, I.L.: Quantum Computation and Quantum Information, 10th anniversary edition edn.
\newblock Cambridge University Press (2010)

\bibitem{pawlak1982}
Pawlak, Z.: Rough sets.
\newblock International Journal of Computer and Information Sciences \textbf{11}(5), 341--356 (1982).
\newblock \doi{10.1007/BF01001956}

\bibitem{pedrycz1998}
Pedrycz, W.: Shadowed sets: representing and processing fuzzy sets.
\newblock IEEE Transactions on Systems, Man, and Cybernetics, Part B (Cybernetics) \textbf{28}(1), 103--109 (1998).
\newblock \doi{10.1109/3477.658584}

\bibitem{pedrycz2005}
Pedrycz, W.: Interpretation of clusters in the framework of shadowed sets.
\newblock Pattern Recognition Letters \textbf{26}(12), 1928--1939 (2005).
\newblock \doi{10.1016/j.patrec.2005.02.004}

\bibitem{pedrycz2020introFuzzyGranules}
Pedrycz, W.: An Introduction to Computing with Fuzzy Sets: Analysis, Design, and Applications.
\newblock Intelligent Systems Reference Library. Springer (2020).
\newblock \doi{10.1007/978-3-030-52800-3}

\bibitem{peres1995}
Peres, A.: Quantum Theory: Concepts and Methods, \emph{Fundamental Theories of Physics}, vol.~57.
\newblock Springer (1995).
\newblock \doi{10.1007/978-94-015-8411-9}

\bibitem{PETRY2022108887}
Petry, F.E., Yager, R.R.: Interval-valued fuzzy sets aggregation and evaluation approaches.
\newblock Applied Soft Computing \textbf{124}, 108887 (2022).
\newblock \doi{https://doi.org/10.1016/j.asoc.2022.108887}

\bibitem{schuld2018}
Schuld, M., Petruccione, F.: Supervised Learning with Quantum Computers.
\newblock Quantum Science and Technology. Springer (2018).
\newblock \doi{10.1007/978-3-319-96424-9}

\bibitem{sharma2021mediative}
Sharma, M.K., Dhiman, N., Vandana, Mishra, V.N.: Mediative fuzzy logic mathematical model: A contradictory management prediction in covid-19 pandemic.
\newblock Applied Soft Computing \textbf{105}, 107285 (2021).
\newblock \doi{10.1016/j.asoc.2021.107285}

\bibitem{SKOWRON2025122078}
Skowron, A., Stepaniuk, J.: Toward rough set based insightful reasoning in intelligent systems.
\newblock Information Sciences \textbf{709}, 122078 (2025).
\newblock \doi{https://doi.org/10.1016/j.ins.2025.122078}

\bibitem{turksen1986interval}
Turksen, I.B.: Interval valued fuzzy sets based on normal forms.
\newblock Fuzzy Sets and Systems \textbf{20}(2), 191--210 (1986).
\newblock \doi{10.1016/0165-0114(86)90077-1}

\bibitem{watrous2018}
Watrous, J.: The Theory of Quantum Information.
\newblock Cambridge University Press (2018)

\bibitem{wilde2017}
Wilde, M.M.: Quantum Information Theory, 2 edn.
\newblock Cambridge University Press, Cambridge (2017)

\bibitem{yao2004}
Yao, Y.: Perspectives of granular computing.
\newblock In: 2005 IEEE International Conference on Granular Computing, vol.~1, pp. 85--90 Vol. 1 (2005).
\newblock \doi{10.1109/GRC.2005.1547239}

\bibitem{yao2010threeway}
Yao, Y.: Three-way decisions with probabilistic rough sets.
\newblock Information Sciences \textbf{180}(3), 341--353 (2010)

\bibitem{yao2011superiority}
Yao, Y.: The superiority of three-way decisions in probabilistic rough set models.
\newblock Information Sciences \textbf{181}(6), 1080--1096 (2011)

\bibitem{yao1998}
Yao, Y.Y.: Relational interpretations of neighborhood operators and rough set approximation operators.
\newblock Information Sciences \textbf{111}(1-4), 239--259 (1998).
\newblock \doi{10.1016/S0020-0255(98)10006-3}

\bibitem{Yuan2025QuantumGranularBalls}
Yuan, S., Tian, X., Lin, W., Xia, S., Deng, J.D.: Quantum granular-ball generation methods and their application in knn classification.
\newblock Scientific Reports \textbf{15}, 29779 (2025).
\newblock \doi{10.1038/s41598-025-14724-3}

\bibitem{zadeh1965fuzzy}
Zadeh, L.A.: Fuzzy sets.
\newblock Information and Control \textbf{8}(3), 338--353 (1965).
\newblock \doi{10.1016/S0019-9958(65)90241-X}

\bibitem{zadeh1997}
Zadeh, L.A.: Toward a theory of fuzzy information granulation and its centrality in human reasoning and fuzzy logic.
\newblock Fuzzy Sets and Systems \textbf{90}(2), 111--127 (1997).
\newblock \doi{10.1016/S0165-0114(97)00077-8}

\end{thebibliography}

\end{document}